\documentclass[11pt]{article}
\usepackage[utf8]{inputenc}
\usepackage[a4paper, total={6.5in, 9in}]{geometry}
\usepackage{amsfonts, amsmath, amsthm, amssymb, mathtools}
\usepackage{graphicx, xcolor, natbib}
\usepackage{algorithm, algorithmicx, algpseudocode}
\usepackage{hyperref}
\usepackage{booktabs}
\usepackage[color]{changebar}

\usepackage{caption}
\newcommand{\indep}{\perp\!\!\!\!\perp}
\newcommand{\iid}{\stackrel{\text{iid}}{\sim}}
\newcommand{\ind}{\stackrel{\text{ind}}{\sim}}

\newcommand{\given}{\,\mid\,}
\newcommand{\agiven}{\;\middle|\;}  

\newcommand{\thetatrain}{\hat{\theta}^{(1)}}           

\newcommand{\ytrain}{y^{(1)}}                           
\newcommand{\ytest}{y^{(2)}}                            

\newcommand{\msehat}{\widehat{\mathrm{MSE}}_{\epsilon}}
\newcommand{\msebar}{\overline{\widehat{\mathrm{MSE}}}}
\newcommand{\msethin}{\mathrm{MSE}_{\epsilon}}
\newcommand{\msefull}{\mathrm{MSE}_{\mathrm{full}}}

\newcommand{\E}[2][]{\mathbb{E}_{#1}\!\left[\,#2\,\right]}

\newcommand{\Var}[2][]{\operatorname{Var}_{#1}\!\left[\,#2\,\right]}
\newcommand{\Varthin}[1]{\Var[\ytrain, \ytest]{#1}}

\newcommand{\Cov}[3][]{\operatorname{Cov}_{#1}\!\left[\,#2,\,#3\,\right]}
\newcommand{\Corr}[3][]{\operatorname{Corr}_{#1}\!\left[\,#2,\,#3\,\right]}

\newcommand{\N}[2]{N\!\left(#1, #2\right)}

\theoremstyle{plain}
\newtheorem{theorem}{Theorem}[section]   
\newtheorem{lemma}[theorem]{Lemma}       
\newtheorem{proposition}[theorem]{Proposition}
\newtheorem{corollary}[theorem]{Corollary}

\theoremstyle{definition}

\newtheorem{assumption}[theorem]{Assumption}

\theoremstyle{remark}
\newtheorem{remark}[theorem]{Remark}

\usepackage{xcolor}

\title{On Data Thinning for Model Validation in Small Area Estimation}
\usepackage{orcidlink}
\usepackage{authblk}

\author{Sho Kawano \orcidlink{0009-0008-1841-8305}\thanks{Corresponding author: \href{mailto:shkawano@ucsc.edu}{shkawano@ucsc.edu}}}
\author{Paul A. Parker \orcidlink{0000-0002-3206-6122}}
\author{Zehang Richard Li \orcidlink{0000-0001-9551-9638}}
\affil{Department of Statistics, University of California, Santa Cruz, CA 95064}
\date{\today}

\begin{document}

\maketitle

\begin{abstract}
Small area estimation produces estimates of population parameters for geographic and demographic subgroups with limited sample sizes. Such estimates are critical for policy decisions, yet principled validation of these models remains a challenge. Unlike conventional predictive settings, validation data are rarely available. Data thinning splits a single observation into independent training and test components. It enables out-of-sample validation using only the area-level summary statistics routinely available, requiring only their Gaussianity and known sampling variances. However, the properties of thinning-based model comparison have not been formally studied. In this paper, we develop these properties. We construct an unbiased estimator of thinned-data mean squared error and show that it differs systematically from its full-data counterpart; for the standard Fay--Herriot model, the gap admits a closed-form expression that depends on the candidate model's shrinkage behavior. We further show that the estimator variance increases sharply as the training fraction approaches one, producing a bias-variance tradeoff with no universally optimal thinning parameter. Practical recommendations balancing these forces are informed by theory and verified empirically. Design-based simulations using American Community Survey microdata show that the recommended data thinning approach is competitive with information-criterion and simulation-based methods, and substantially more stable across heterogeneous sampling designs.
\end{abstract}

\section{Introduction}

Small area estimation (SAE) provides critical information for policy makers and analysts throughout the world. From poverty mapping to disease surveillance to electoral analysis, practitioners rely on model-based estimators to produce reliable estimates for geographic or demographic subgroups where direct survey estimates are too imprecise. In the United States, the Census Bureau's Small Area Income and Poverty Estimates program is the primary source of annual income and poverty statistics for all states and counties; these estimates are used for administering federal programs and allocating funds that amounted to more than \$14 billion in 2013 \citep{pratesi_overview_2016}. Globally, the United Nations General Assembly’s 2030 Agenda established a set of Sustainable Development Goals for global development, requiring accurate tracking of demographic and health indicators in fine geographic resolutions \citep{general_assemby_of_the_united_nations_resolution_2015}.

Despite the importance of these applications, standard practices for validating and comparing SAE models are far from well-established. Model choice can substantially affect the estimates that inform consequential decisions, making validation an important open problem. In the SAE setting, external validation surveys or censuses often do not exist, and access to individual-level microdata is frequently restricted, with national statistical agencies releasing only area-level summaries rather than unit-level records. Even when microdata are accessible, the complexity of unit-level models can make area-level approaches preferable in practice. Analysts in this \textit{area-level modeling paradigm} must therefore validate models using a single set of survey summaries, with no independent replication and no unit-level observations to fall back on.

\subsection{Existing Approaches and Their Pitfalls}

The most credible validation approaches for SAE models require fairly special circumstances and data that eludes most data analysts. Census-based validation compares model estimates against census values treated as truth, providing an unambiguous benchmark. But census values are available for only a narrow set of variables, subject to temporal lag, and often entirely absent in low- and middle-income countries \citep{dong_toward_2025}. Design-based simulation studies generate repeated samples from survey microdata, fit models to each, and evaluate accuracy against known population quantities. This approach is the methodological gold standard when microdata are available (see \citet{molina_small_2010, datta_small_2015}), although it can incur heavy computational costs. When neither census data nor microdata are accessible, practitioners resort to alternative procedures, often not designed for comparing SAE models. We identify four recurring pitfalls that arise in these procedures.

\begin{itemize}
    \item \textbf{Paradigm dependence (A):} The procedure is tied to either the Bayesian or frequentist framework, complicating comparisons of estimators across paradigms.

    \item \textbf{In-sample evaluation (B):} The procedure uses the same data for fitting and assessment and relies on a penalty term to approximate out-of-sample performance. The approximation relies on asymptotic arguments that may not hold across all survey settings.

    \item \textbf{Holding out areas (C):} The procedure splits the data into training and testing set by holding out a subset of areas. Predicting for held-out areas effectively evaluates the ability to extrapolate, which is a different inferential goal than improving estimates in sampled areas.

    \item \textbf{Additional assumptions (D):} The procedure relies on strong extra assumptions on the true data-generating process that cannot be empirically verified, with no guarantee of validity when these assumptions are violated. 
\end{itemize}

Existing validation approaches in SAE often reflect one or more of these limitations. For example, information criteria are commonly used for model comparison in SAE. As in-sample measures (B), they add a complexity penalty to a goodness-of-fit term, but penalty estimation relies on asymptotic approximations in the number of areas, which can be moderate in SAE applications. They are also paradigm-dependent (A), e.g., AIC \citep{akaike_new_1974} arises from the frequentist paradigm, while measures like DIC \citep{spiegelhalter_bayesian_2002} and WAIC \citep{watanabe_asymptotic_2010} are defined for Bayesian models through posterior inference.

Area-level cross-validation offers a more direct out-of-sample assessment and has been used to assess models \citep{michal_model-based_2024}.  In area-level models, such evaluation is equivalent to leave-one-out cross-validation and the conditional predictive ordinate (CPO)
\citep{stern_posterior_2000, marshall_approximate_2003}, with efficient approximations available for Bayesian models \citep{vehtari_practical_2017}.
Unlike standard prediction problems, however, removing an area's direct estimate eliminates the only area-specific information available, effectively evaluating extrapolation rather than improving estimates in sampled areas.

Simulation-based assessment is also frequently used in the literature \citep{bradley_multivariate_2015, janicki_bayesian_2022}. A common practice is to generate synthetic direct estimates by adding noise based on survey variances to observed direct estimates, and then validate model fits against the original direct estimates. We term this approach Empirical Simulation (ESIM).\footnote{Appendix~\ref{app:esim} covers the connections between ESIM and data fission, a related technique to data thinning.} This treats the observed direct estimates as truth (D), an assumption difficult to justify where model-based estimation is most needed. Model-based simulation studies similarly assume a known data-generating process (D), making them valuable for controlled theoretical study but less directly informative for real-world performance. In both cases, it shifts the question from assessing models given the observed data to assessing models given a somewhat arbitrarily assumed underlying truth. Simulation studies also often incur a high computational burden.

The fence method \citep{jiang_fence_2008} takes a different approach: compute a lack-of-fit measure for each candidate, set a fence based on the minimum plus a margin, and select the simplest model within the fence. As an in-sample method (B), it requires calibrating a tuning constant via bootstrap, introducing computational and calibration challenges analogous to fold selection. Variants have been proposed to reduce computational burden for restricted maximum-likelihood methods \citep{nguyen_restricted_2012}.

We note two categories of methods outside our scope. First, unit-level validation approaches, including survey-weighted cross-validation \citep{wieczorek_k-fold_2022} and leave-one-out cross-validation at the unit-level \citep{kuh_using_2024}, require microdata and thus fall outside the area-level paradigm we address. Second, diagnostic tools such as influence measures \citep{marcis_three-fold_2023} and goodness-of-smoothing statistics \citep{duncan_comparing_2020} help identify potential problems with a fitted model but do not provide formal selection or validation criteria.

\subsection{Data Thinning for Area-level SAE Models}

In this paper, we develop a novel model comparison approach for area-level models in SAE, addressing each limitation identified above. Our approach is based on data thinning \citep{neufeld_data_2024, dharamshi_generalized_2025}; see also \citet{rasines_splitting_2023, leiner_data_2025, oliveira_unbiased_2024} for related work. Data thinning splits a single observation into two independent training
and test components that add up to the original.
For Gaussian data, data thinning relies on two assumptions: the Gaussian distributional assumption and known variance parameters.
Most area-level SAE models adopt a Gaussian likelihood and treat the sampling variances associated with direct survey estimates as known, making them well suited to data thinning.
The foundational area-level SAE model, the Fay--Herriot model \citep{fay_estimates_1979}, satisfies these conditions directly, together with the majority of its extensions including spatial random effects \citep{zhou_hierarchical_2008}, shrinkage priors on random effects \citep{datta_small_2015}, combined spatial and shrinkage priors  \citep{kawano_spatially_2025},  nonlinear mean structures \citep{parker_nonlinear_2024}, and spatially varying regression coefficients \citep{janicki_bayesian_2022}.

Data thinning addresses each of the limitations (A--D) discussed above. It is estimator-agnostic (A), applying equally to Bayesian and frequentist approaches by comparing predictions to genuinely independent test data. It avoids in-sample bias (B) because training and test sets are marginally independent, requiring no penalty terms to approximate out-of-sample performance. It improves upon area-level cross-validation (C) by providing continuous control over training fractions while keeping all areas in the training and test components. Finally, it requires only standard modeling assumptions (D): Gaussian direct estimates with known sampling variances. Our empirical analysis demonstrates that data thinning
yields reliable model comparisons that are competitive with DIC, WAIC, and ESIM while providing much more stable performance across different sampling designs.

We also provide, to our knowledge, the first theoretical characterization of the fundamental properties of data thinning when validating SAE models. First, we identify a systematic discrepancy between thinned-data and full-data performance metrics. We show that this gap depends on model complexity through shrinkage parameters and creates bias toward simpler models when information allocated to the training component is low. Second, we show that the variance of the estimated model performance metric increases when more information is allocated to the training component. These competing forces reveal a fundamental tension for validation using data thinning, implying no universally optimal thinning parameter exists across candidate models.

More broadly, the assumptions enabling data thinning for SAE models also appear in other latent Gaussian settings such as meta-analysis, measurement error models, and spatial statistics with instrument-level precision, so the implications from our analysis are not limited to SAE. Our findings characterize data thinning properties that arise whenever model complexity affects shrinkage behavior.

The remainder of the paper proceeds as follows. Section~\ref{sec:background} reviews the Fay--Herriot model, introduces Gaussian data thinning, and presents our motivating example of spatial basis function selection. Section~\ref{sec:mse_validation} develops theoretical results for MSE-based validation, establishing unbiased estimation, analyzing the thinning gap, and characterizing the variance-gap trade-off. Section~\ref{sec:repeated} compares repeated and multi-fold thinning strategies. Section~\ref{sec:likelihood} extends the framework to likelihood-based validation, showing connections to weighted MSE. Section~\ref{sec:empirical} presents empirical results comparing data thinning against existing methods across multiple survey designs. Section~\ref{sec:discussion} concludes with discussion of limitations and extensions.

\section{Background} \label{sec:background}

\subsection{Small Area Estimation and the Fay--Herriot Model}

 Let a finite population be partitioned into $m$ small areas. In each area $i=1,\dots,m$, a survey sample of size $n_i$ is drawn from a population of size $N_i$, with the goal of estimating the finite population means $\theta_1, \ldots, \theta_m$, for some parameter of interest in each area. For ease of notation, we write $\theta := (\theta_1, \ldots, \theta_m)^\top$ when referring to the full vector.

Let $y_i$ denote the direct estimator of $\theta_i$ and let $d_i$ denote its sampling variance. Direct estimators are based only on area-specific data and acknowledge the sampling design by weighting the individual responses. They are unbiased under the sampling design.

For areas where the sample size is small, direct estimators can have unreasonably high variances, which may necessitate the use of model-based estimators. The foundational area-level model is the Fay--Herriot model  \citep{fay_estimates_1979}  that models the small area mean with  $y_i \ind \N{\theta_i}{d_i}$,
for areas $i=1, \ldots, m$, and $\theta_i$ is further modeled as a linear function of covariates and random effects.

The Gaussian assumption for $y_i$ is supported by the design-based Central Limit Theorem \citep{hajek_asymptotic_1964}. The assumption that $d_i$ is known is more unusual. In most statistical settings, variances must be estimated. But in survey sampling, design-based variance estimators such as Taylor linearization or replication methods provide $d_i$ directly from the sampling design, independent of any model for $\theta$ \citep{lohr_sampling_1999}. Thus, the common assumption in area-level modeling is that the variances are known. In our work, we treat the $\theta$ as fixed finite-population means and evaluate model-based estimators by averaging over sampling and thinning randomness, without subscribing to the model's assumption that $\theta$ is random. We formalize this framework in Section~\ref{sec:mse_validation}.

\subsection{Gaussian Data Thinning} \label{sec:dt_background}

The concept of splitting a single Gaussian observation for training
and validation has been explored in several recent works \citep{rasines_splitting_2023, oliveira_unbiased_2024, leiner_data_2025}.
\citet{neufeld_data_2024} unified this approach for the full class
of convolution-closed distributions through data thinning.
In the Gaussian case, data thinning decomposes observation $y_i \sim \N{\theta_i}{d_i}$ into training and test sets $\ytrain_i$ and $\ytest_i$ such that: (i) the two parts sum to the original observation, $\ytrain_i + \ytest_i = y_i$; (ii) the two parts are marginally independent, $\ytrain_i \indep \ytest_i$; and (iii) both components follow Gaussian distributions with known parameters. Remarkably, all three properties can be achieved simultaneously via Algorithm \ref{alg:thin}. The resulting marginal distributions are $\ytrain_i \sim \N{\epsilon \theta_i}{\epsilon d_i}$ and $\ytest_i \sim \N{(1-\epsilon)\theta_i}{(1-\epsilon)d_i}$.

\begin{algorithm}[H]
\caption{Gaussian Data Thinning (Algorithm 1 of \citet{neufeld_data_2024})}\label{alg:thin}
\begin{algorithmic}[1]
\Require Direct estimate $y_i \sim \N{\theta_i}{d_i}$ with known variance $d_i$
\Require Thinning parameter $\epsilon \in (0, 1)$
\State Draw $\ytrain_i \mid y_i \sim \N{\epsilon y_i}{\epsilon(1-\epsilon)d_i}$
\State Set $\ytest_i = y_i - \ytrain_i$
\State \Return Training observation $\ytrain_i$ and test observation $\ytest_i$
\end{algorithmic}
\end{algorithm}

It is worth distinguishing how data thinning creates independence. Conditional on the observed data $y_i$, the training and test components are perfectly negatively correlated since $\ytest_i = y_i - \ytrain_i$. Independence and the out-of-sample validation enabled by data thinning holds only \emph{marginally} without conditioning on the specific realization of $y_i$. The key implication is that error estimates derived from data thinning are unbiased only in expectation over hypothetical sample datasets, not for the error on any particular dataset. This fundamental limitation, where validation does not target dataset-specific performance, also arises in cross-validation  \citep{bates_cross-validation_2024}.

For the algorithm above to actually produce marginally independent observations, two conditions must hold: $y_i$ must be Gaussian and the variance must be known. Proposition 10 of \citet{neufeld_data_2024} shows that if thinning is performed using an incorrect variance $\tilde{d}_i$ instead of the true $d_i$, the resulting sets have covariance
\[
\Cov{\ytrain_i}{\ytest_i} = \epsilon(1-\epsilon)(d_i - \tilde{d}_i).
\]
Thus underestimating the variance ($\tilde{d}_i < d_i$) induces positive correlation between sets, while overestimating ($\tilde{d}_i > d_i$) induces negative correlation. In practice, the design-based variance estimators used in survey sampling are generally reliable, but practitioners should be aware that substantial misspecification of $d_i$ will compromise the data thinning approach.

\subsection{Motivating Example: Selecting Spatial Basis Functions} \label{sec:motivating_ex}

We now turn to a concrete model selection problem that motivates our theoretical analysis. Spatial correlation is common in SAE. Neighboring regions often share economic conditions, demographic composition, or policy environments. Capturing this structure requires model choices: how much spatial smoothing is appropriate and how can we validate this choice?

Consider modeling with spatial basis functions, commonly used due to reduced computational burden compared to other spatial models. Following \citet{hughes_dimension_2013}, we construct basis functions using the Moran operator \citep{moran_notes_1950}, which captures spatial autocorrelation orthogonal to an initial covariate matrix based on the adjacency structure of the data (construction details in Section~\ref{sec:empirical}). We use the $p$ leading eigenvectors as covariates in the Fay--Herriot model to capture spatial dependence across areas. Small values of $p$ result in strong spatial smoothing while higher values of $p$ result in less. Figure~\ref{fig:map} illustrates this progression for the spatial effects for a sample dataset in California, created using the American Community Survey Public Use Microdata Sample (PUMS). Using $p = 6$ basis functions results in much more spatial smoothing, while the model with $p = 42$ shows finer local variation.

\begin{figure}[H]
  \centering
  \includegraphics[width=0.9\linewidth]{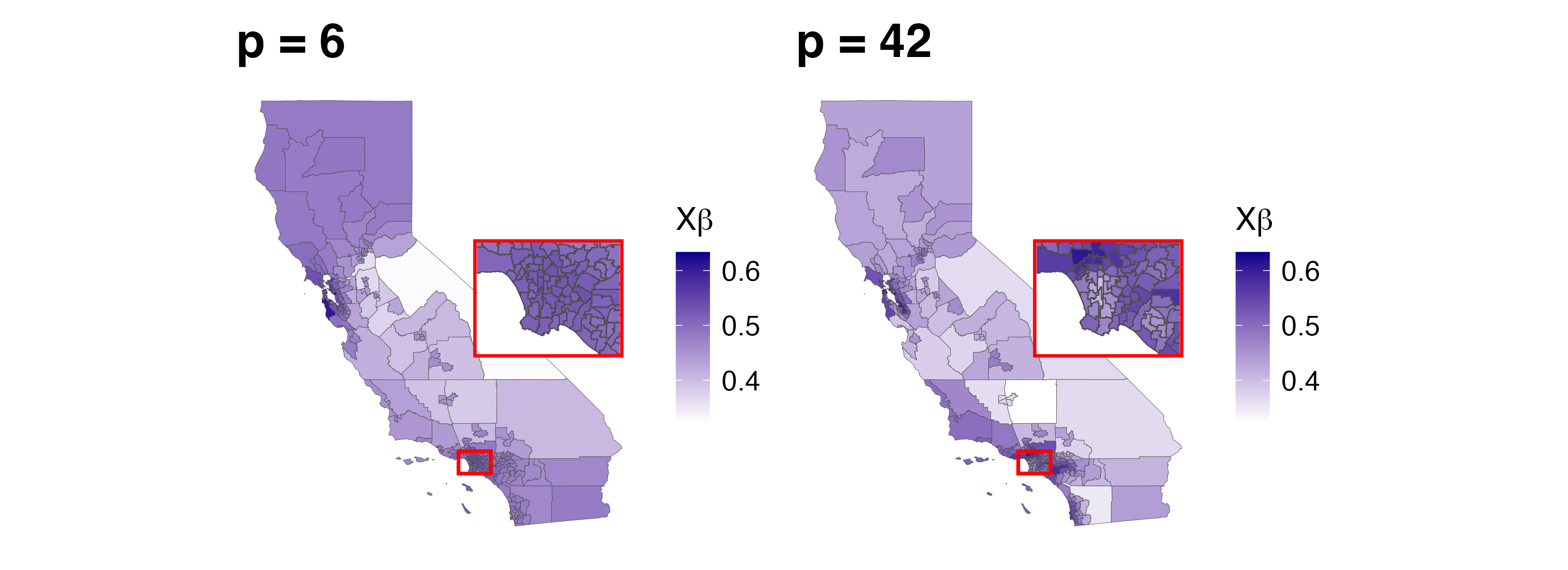}
\caption{Spatial covariate effects for the Fay--Herriot model for example data created using PUMS for California. Using $p = 6$ basis functions results in much more spatial smoothing. The model with $p = 42$ shows much finer local variation, particularly in the north and the southern regions of the state including Greater Los Angeles, shown in the zoomed-in rectangle. We use this as our empirical model validation example in subsequent sections.}
  \label{fig:map}
\end{figure}

\newpage
Currently, there is no clear way to determine how many basis functions should be selected.  \citet[p.~156]{hughes_dimension_2013} suggest using roughly 10\% of available eigenvectors, a heuristic many in the field adopt directly, though they note that a DIC-based approach ``is obviously more defensible.'' More recently, \citet{janicki_bayesian_2022} observe that ``the choice of the number of basis functions to use in the model specification remains an open question''.

The PUMS data used in Figure~\ref{fig:map} provides access to a complete set of microdata, so we can treat the population quantities as known and then subsample. Thus, this setup allows for design-based simulations ideal for studying model selection methods with a real practical need. We use this example throughout Section~\ref{sec:mse_validation} to illustrate theoretical results. This example also forms the foundation for our empirical analysis and model comparison in Section~\ref{sec:empirical}, where we provide full details on the sample generation and spatial basis functions.


\section{MSE-Based Validation with Data Thinning}\label{sec:mse_validation}

\subsection{Assumptions and Conventions}


\begin{assumption}[Finite-population framework with known variances]\label{assump:framework}
We assume: (i) $y_i \ind N(\theta_i, d_i)$ for $i=1, \ldots, m$, where $\theta_i$ are fixed unknown finite-population means; (ii) the sampling variances $d_i$ are known.
\end{assumption}

Model-based estimators are evaluated under this finite-population perspective: we assess accuracy using only the assumptions above. While we assess estimates from Bayesian models that treat $\theta$ as random in Section~\ref{sec:empirical}, the data thinning procedure and our theoretical results require only Assumption~\ref{assump:framework}.
It is important to distinguish between the randomness from the sampling process producing $y$, and the thinning procedure producing $\{\ytrain, \ytest\}$. We use subscripts on expectation operators to clarify which source of variability is being averaged over: $\E[y]{\cdot}$ for the sampling distribution, $\E[\ytrain,\ytest]{\cdot}$ for the marginal distribution of thinned data (unconditional on $y$), and $\E[\ytrain,\ytest]{\cdot \given y}$ when conditioning on the realized dataset and averaging only over thinning. When $\E{\cdot}$ appears without subscript, the relevant source of randomness will be clear from context or stated explicitly.


\subsection{MSE Estimation using Data Thinning}
We consider model comparison based on their expected squared error averaged across all $m$ areas, conditional on the fixed means $\theta$. We refer to this quantity as the mean squared error (MSE).  It differs from area-specific expected squared errors sometimes used in the SAE literature \citep[see e.g.,][]{prasad_estimation_1990}. We focus on the aggregate because practitioners often use model selection to improve overall performance across all areas. Our natural target estimand for model comparison is
\[
\msefull = \frac{1}{m}\sum_{i=1}^m \E[y]{(\hat{\theta}_i - \theta_i)^2},
\]
where the expectation is taken over the sampling distribution of the full data $y$. Since this requires knowledge of $\theta_i$, we refer to this quantity as the full-data oracle MSE.  It measures the expected performance of the estimates practitioners would actually deploy, computed using the full data.

Consider the data thinning setup where we split $y_i \sim \N{\theta_i}{d_i}$ into marginally independent $\ytrain_i$ and $\ytest_i$ following Algorithm \ref{alg:thin}. In the context of SAE, we can treat the training and test observations as new replicate direct estimates of $\theta_i$ after scaling, with effective sample sizes of $\epsilon n_i$ and $(1-\epsilon)n_i$ respectively, i.e.,
\[
 \ytrain_i / \epsilon \ind \N{\theta_i}{d_i/\epsilon}, \qquad  \ytest_i /(1-\epsilon) \ind \N{\theta_i}{d_i/(1-\epsilon)}.
\]
Figure~\ref{fig:thinning_maps} in the Appendix visualizes this for a single sample.

The thinning parameter $\epsilon$ controls the allocation of information across sets and the relative variances of these new direct estimates. At a high level, model validation can be carried out by fitting a model using $\ytrain/\epsilon$ to create estimates of $\theta$, and then the MSE can be evaluated on $\ytest/(1-\epsilon)$. We define the thinned-data oracle MSE as the expected squared error of the procedure trained on $\epsilon$-thinned data, i.e.,
\[
\msethin := \frac{1}{m}\sum_{i=1}^m \E[\ytrain]{\left( \thetatrain_i - \theta_i \right)^2},
\]
where $\thetatrain$ are estimates of $\theta$ created from $\ytrain$, and the expectation is taken over the marginal distribution of $\ytrain$, which includes randomness of the data $y$ and the randomness from the thinning procedure. Note that $\msethin$ reduces to $\msefull$ when $\epsilon = 1$. But setting $\epsilon = 1$ dedicates all information to estimation and leaves nothing for validation, making $\msefull$, the ideal target, unachievable unless $\theta$ is known.

Data thinning allows us to estimate $\msethin$ with $\ytrain_i$ and $\ytest_i$. We propose the following estimator for a single set of thinned data,
\[
\msehat := \frac{1}{m}\sum_{i=1}^m \left[ \left( \thetatrain_i - \tfrac{1}{1-\epsilon}\ytest_i \right)^{\!2} - \tfrac{d_i}{1-\epsilon} \right].
\]
The term $d_i/(1-\epsilon)$ corrects for bias from using $\ytest_i /(1-\epsilon)$ as a surrogate for $\theta_i$; this correction can yield negative values if test noise exceeds the squared error.
\begin{theorem}[Unbiased MSE estimation]
\label{thm:unbiased_mse}
The estimator $\msehat$ is unbiased for the thinned-data oracle MSE:
\[
\E{\msehat} = \msethin,
\]
where the expectation is taken over the joint distribution of $(y^{(1)}, y^{(2)})$, unconditional on $y$.
\end{theorem}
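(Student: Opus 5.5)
The argument is area by area: decompose each summand of $\msehat$ around $\theta_i$ and use the marginal independence of $\ytrain$ and $\ytest$ from Algorithm~\ref{alg:thin}.

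Write $e_i := \ytest_i/(1-\epsilon) - \theta_i$. By the marginal distribution $\ytest_i \sim \N{(1-\epsilon)\theta_i}{(1-\epsilon)d_i}$, we have $\E{e_i} = 0$ and $\E{e_i^2} = d_i/(1-\epsilon)$. Expanding the squared term gives
\[
\left(\thetatrain_i - \tfrac{1}{1-\epsilon}\ytest_i\right)^{2} = (\thetatrain_i - \theta_i)^2 - 2(\thetatrain_i - \theta_i)e_i + e_i^2 .
\]
Taking expectations over the joint law of $(\ytrain, \ytest)$, the first term has expectation $\E[\ytrain]{(\thetatrain_i-\theta_i)^2}$, which is the $i$th summand of $\msethin$.

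The cross term is where care is needed. The estimator $\thetatrain$ is a function of the whole training vector $\ytrain = (\ytrain_1,\dots,\ytrain_m)$, not only of $\ytrain_i$. So I need independence between the entire vector $\ytrain$ and $\ytest_i$, not just the coordinatewise statement. This follows from three facts: the $y_i$ are independent across areas (Assumption~\ref{assump:framework}), the thinning draws are made independently per area, and within an area $\ytrain_i \indep \ytest_i$. Together these make the pairs $(\ytrain_j,\ytest_j)$ mutually independent across $j$, and hence $\ytrain \indep \ytest_i$. For a Gaussian model this can also be verified directly: $(\ytrain_i,\ytest_i)$ is jointly Gaussian with covariance $\epsilon(1-\epsilon)d_i - \epsilon(1-\epsilon)d_i = 0$, which is the $\tilde d_i = d_i$ case of the covariance formula from Proposition 10 of \citet{neufeld_data_2024} quoted above.

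Given this independence, $\E{(\thetatrain_i-\theta_i)e_i} = \E{\thetatrain_i-\theta_i}\,\E{e_i} = 0$. I will add the mild assumption that $\thetatrain_i$ has a finite second moment, which is already needed for $\msethin$ to be finite. The last term satisfies $\E{e_i^2} = d_i/(1-\epsilon)$, which exactly cancels the correction $d_i/(1-\epsilon)$ in $\msehat$. Averaging over $i$ and using linearity of expectation then gives $\E{\msehat} = \msethin$.

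Everything else is routine. The only step that needs justification is the vector-level independence $\ytrain \indep \ytest_i$ in the cross term; I expect no real obstacle beyond stating it explicitly.
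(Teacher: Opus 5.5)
Your proof is correct and follows essentially the same route as the paper: the paper defines $\delta_i = \thetatrain_i - \theta_i$ and $\eta_i$ (your $e_i$), uses the marginal independence of $\ytrain$ and $\ytest$ to get $\mathbb{E}[(\delta_i-\eta_i)^2 \given \ytrain] = \delta_i^2 + d_i/(1-\epsilon)$, and cancels the correction term. The differences are minor. The paper conditions on $\ytrain$ first, which also makes your finite-second-moment assumption unnecessary. You instead expand under the joint law and spell out the vector-level independence $\ytrain \indep \ytest_i$, which the paper uses only implicitly.
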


\begin{proof}
We take the expectation over the joint distribution of $\ytrain$ and $\ytest$ to get
\begin{align*}
    \E{\msehat}
& = \frac{1}{m}\sum_{i=1}^m
   \E[\ytrain, \ytest]{ \left( \thetatrain_i - \tfrac{1}{1-\epsilon}\ytest_i \right)^{\!2}
         - \tfrac{d_i}{1-\epsilon}} \\
&=  \frac{1}{m}\sum_{i=1}^m\E[\ytrain]{\E[\ytest]{\left( \thetatrain_i - \tfrac{1}{1-\epsilon}\ytest_i \right)^{\!2}} - \tfrac{d_i}{1-\epsilon}},
\end{align*}
where the second equality results from the marginal independence of $\ytrain$ and $\ytest$, unconditional on $y$.

Fix an area $i$ and define
\[
\delta_i := \thetatrain_i - \theta_i,
\qquad
\eta_i := \tfrac{1}{1-\epsilon}\ytest_i - \theta_i .
\]
Note that $\delta_i$ is only random with respect to $\ytrain$, while $\eta_i \ind N(0, d_i/(1-\epsilon))$ is random with respect to $\ytest$.

Thus we have
\[
\E[\ytest]{(\delta_i-\eta_i)^2}
=
\delta_i^2 - 2 \delta_i \cdot \E[\ytest]{\eta_i} +  \E[\ytest]{\eta_i^2}
=
\delta_i^2 + \tfrac{d_i}{1-\epsilon}.
\]
Substituting into the definition of $\msehat$ and taking $\E[\ytrain]{\cdot}$ gives
\begin{align*}
    \E{\msehat}
&=  \frac{1}{m}\sum_{i=1}^m\E[\ytrain]{\E[\ytest]{(\delta_i-\eta_i)^2} - \tfrac{d_i}{1-\epsilon}}\\
&=  \frac{1}{m}\sum_{i=1}^m \E[\ytrain]{\delta_i^2 + \tfrac{d_i}{1-\epsilon} - \tfrac{d_i}{1-\epsilon}}  \\
&=
\frac{1}{m}\sum_{i=1}^m \E[\ytrain]{\left( \thetatrain_i - \theta_i \right)^2}
=
\msethin.
\end{align*}
\end{proof}

A natural question is whether $\msehat$ can serve as a proxy for the target quantity, $\msefull$. The following decomposition provides a useful framework for understanding the inherent tension in model validation using data thinning:
\begin{equation}\label{eq:tradeoff}
        \E{(\msehat - \msefull)^2} = \bigg(\, \underbrace{\msethin - \msefull}_{\text{The Thinning Gap}}\, \bigg)^2 +
        \underbrace{\Var{\msehat}}_{\text{The Estimator Variance}}.
\end{equation}
Note that $\msehat$ is the only random quantity in this decomposition. Both $\msethin$ and $\msefull$ are constants given $\epsilon$. Since $\msehat$ is unbiased for $\msethin$, the cross-term vanishes. Therefore, to estimate $\msefull$ for model comparison using data thinning, we must account for both the thinning gap between $\msethin$ and $\msefull$ and the variability of $\msehat$. The balance of these quantities changes with the thinning parameter $\epsilon$. We examine them in the next two subsections.


\subsection{The Thinning Gap}\label{sec:thinning_gap}

The thinning gap, $\msethin - \msefull$, is the systematic difference between the thinned-data and full-data MSE for predicting the fixed finite-population mean $\theta_i$ from Assumption~\ref{assump:framework}. Since $\theta_i$ is common to all candidate models, this difference is comparable across model choices. To gain insight into its structure, we derive this gap analytically under a correctly specified Fay--Herriot model. We first review how shrinkage arises in the classical formulation,
\[
y_i \ind \N{\theta_i}{d_i}, \quad \theta_i = x_i^\top \beta + u_i, \quad u_i \iid N(0, \sigma^2),
\]
where $x_i$ are $p$-dimensional covariate vectors, $\beta$ is the coefficient vector, and $u_i$ are random effects capturing residual variability in $\theta$.

Assuming $\beta$ and $\sigma^2$ are known, the posterior mean of $\theta_i$ given $y_i$ is
\[
\tilde{\theta}_i = \gamma_i \, y_i + (1 - \gamma_i) \, x_i^\top \beta, \qquad \gamma_i = \frac{\sigma^2}{\sigma^2 + d_i},
\]
where $\gamma_i \in (0,1)$ governs the balance between the direct estimate $y_i$ and the regression prediction $x_i^\top\beta$. When the sampling variance $d_i$ is large relative to $\sigma^2$, the direct estimate is unreliable and $\gamma_i$ is small, so $\tilde{\theta}_i$ shrinks toward the regression term that borrows strength across all areas. When $d_i$ is small, the direct estimate dominates. Model complexity affects this balance: more flexible models can explain more variation in $\theta$ through the mean structure, reducing $\sigma^2$ and shifting the estimator toward the model-based component.

For thinned data, we work with the rescaled direct estimate $\tfrac{1}{\epsilon} y_i^{(1)} \sim \N{\theta_i}{d_i/\epsilon}$, which has inflated variance. The corresponding posterior mean is
\[
\tilde{\theta}_i^{(1)} = \gamma_i(\epsilon) \, \tfrac{1}{\epsilon} y_i^{(1)} + (1 - \gamma_i(\epsilon)) \, x_i^\top \beta, \qquad \gamma_i(\epsilon) = \frac{\sigma^2}{\sigma^2 + d_i/\epsilon}.
\]
Since thinning inflates the sampling variance from $d_i$ to $d_i/\epsilon$, the thinned estimator shrinks more toward the regression term since $\gamma_i(\epsilon) < \gamma_i$ for all $\epsilon < 1$.

We now show that the expected thinning gap is positive and its magnitude depends on the shrinkage behavior of the model.

\begin{proposition}
\label{prop:thinning_gap}
(MSE thinning gap under known parameters) Under the correctly specified Fay--Herriot model with known $\beta, \sigma^2$, $$\msethin - \msefull = \frac{1}{m} \sum_{i=1}^m \Delta_i(\epsilon)$$
where
\[\Delta_i(\epsilon) = \frac{1-\epsilon}{\epsilon} \cdot \gamma_i(\epsilon) \, \gamma_i \, d_i > 0.\]
\end{proposition}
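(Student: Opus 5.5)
The plan is to compute both oracle MSEs in closed form, area by area, and subtract. Under the correctly specified Fay--Herriot model with known $\beta,\sigma^2$, both $\tilde\theta_i$ and $\tilde\theta_i^{(1)}$ are linear shrinkage estimators of the form $g\, z + (1-g)\,x_i^\top\beta$. Here $z\sim\N{\theta_i}{v}$ is the relevant direct estimate:
\begin{itemize}
\item for the full data, $g=\gamma_i$, $z=y_i$, and $v=d_i$;
\item for the thinned data, $g=\gamma_i(\epsilon)$, $z=\ytrain_i/\epsilon$, and $v=d_i/\epsilon$.
\end{itemize}
The marginal distribution of $\ytrain_i/\epsilon$ was recorded just before Theorem~\ref{thm:unbiased_mse}, so the thinned case is literally the full-data calculation with $d_i$ replaced by $d_i/\epsilon$. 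I will therefore do one generic computation and specialize it twice.

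For fixed $\theta_i$, a bias--variance split gives
\[
\E{(g z + (1-g)x_i^\top\beta - \theta_i)^2} = g^2 v + (1-g)^2(\theta_i - x_i^\top\beta)^2 = g^2 v + (1-g)^2 u_i^2 .
\]
"Correctly specified" means the $u_i$ are draws from $N(0,\sigma^2)$. So I will average the area-level risk over the model distribution of $u_i$, replacing $u_i^2$ by $\sigma^2$; equivalently, this is the Bayes risk, which equals the posterior variance. With $g=\sigma^2/(\sigma^2+v)$ this simplifies to
\[
g^2 v + (1-g)^2\sigma^2 = \frac{\sigma^4 v + v^2\sigma^2}{(\sigma^2+v)^2} = \frac{\sigma^2 v}{\sigma^2+v} = g\,v .
\]
Hence the full-data area risk is $\gamma_i d_i$ and the thinned-data area risk is $\gamma_i(\epsilon)\,d_i/\epsilon$.

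It then remains to verify the identity $\gamma_i(\epsilon)d_i/\epsilon - \gamma_i d_i = \Delta_i(\epsilon)$. Writing $\gamma_i(\epsilon)=\epsilon\sigma^2/(\epsilon\sigma^2+d_i)$ and putting both terms over a common denominator gives
\[
\frac{\sigma^2 d_i}{\epsilon\sigma^2+d_i}-\frac{\sigma^2 d_i}{\sigma^2+d_i}
= \frac{(1-\epsilon)\sigma^4 d_i}{(\epsilon\sigma^2+d_i)(\sigma^2+d_i)} .
\]
This equals $\frac{1-\epsilon}{\epsilon}\cdot\frac{\epsilon\sigma^2}{\epsilon\sigma^2+d_i}\cdot\frac{\sigma^2}{\sigma^2+d_i}\cdot d_i$, which is $\frac{1-\epsilon}{\epsilon}\gamma_i(\epsilon)\gamma_i d_i$. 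Positivity is then immediate for $\epsilon\in(0,1)$, since $\gamma_i,\gamma_i(\epsilon)\in(0,1)$ and $d_i>0$. Averaging over $i$ gives the stated formula for $\msethin-\msefull$.

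The main obstacle is not algebraic but interpretive. Under the fixed-$\theta$ convention of Assumption~\ref{assump:framework}, the exact gap contains $u_i^2=(\theta_i-x_i^\top\beta)^2$ rather than $\sigma^2$. The proposition's phrase "correctly specified" must therefore be read as taking the expectation of $(\theta_i-x_i^\top\beta)^2$ under the model, i.e.\ averaging over the random effects as well as sampling and thinning. I will state this explicitly in the proof. I will also note that the fixed-$\theta$ gap, $\frac{1}{m}\sum_i[\gamma_i(\epsilon)^2 d_i/\epsilon-\gamma_i^2 d_i+((1-\gamma_i(\epsilon))^2-(1-\gamma_i)^2)u_i^2]$, has exactly this expectation, so the proposition describes the expected thinning gap.
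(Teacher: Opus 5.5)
Your proposal is correct and follows essentially the same route as the paper: obtain the per-area risks $\gamma_i d_i$ and $\gamma_i(\epsilon)\,d_i/\epsilon$, subtract, and factor the difference as $\frac{1-\epsilon}{\epsilon}\gamma_i(\epsilon)\gamma_i d_i$. The differences are minor. You derive the risk $g v$ directly, by a bias--variance split followed by averaging $u_i^2$ to $\sigma^2$, where the paper simply cites Prasad and Rao. You also state explicitly that the identity holds only after averaging over the random effects. That clarification is worthwhile: under fixed $\theta$, an area with $u_i=0$ and $d_i>\sigma^2$ has a negative gap, because $\sigma^4 v/(\sigma^2+v)^2$ decreases in $v$ for $v>\sigma^2$.
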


\begin{proof}
Assuming correct model specification, the expected squared error for the full-data case follows from \citet{prasad_estimation_1990}. The thinned-data case follows identically with inflated variance:
\[
\E{(\tilde{\theta}_i - \theta_i)^2}= \gamma_i d_i, \qquad \E{(\tilde{\theta}_i^{(1)} - \theta_i)^2} = \gamma_i(\epsilon) \cdot d_i/\epsilon.
\]
The per-area gap is therefore
\[\Delta_i(\epsilon) = \frac{\gamma_i(\epsilon) d_i}{\epsilon} - \gamma_i d_i = d_i \left( \frac{\gamma_i(\epsilon)}{\epsilon} - \gamma_i \right).\]

We simplify the term inside the parenthesis to get
\begin{align*}
    \frac{\gamma_i(\epsilon)}{\epsilon} - \gamma_i &= \frac{\sigma^2}{\sigma^2 \epsilon + d_i} - \frac{\sigma^2}{\sigma^2 + d_i} = \frac{\sigma^4 (1-\epsilon)}{( \epsilon  \sigma^2  + d_i)(\sigma^2 + d_i)}
\end{align*}
which is positive for all $\epsilon \in (0,1)$.
This term further simplifies to
\[
= \frac{1-\epsilon}{\epsilon}  \cdot \frac{\sigma^2}{\sigma^2 + d_i}  \cdot\frac{ \sigma^2}{\sigma^2  + d_i/\epsilon}
\]
Substituting the definitions of $\gamma_i(\epsilon)$ and $\gamma_i$, we have the result
$$\Delta_i(\epsilon) = \frac{1-\epsilon}{\epsilon} \cdot \gamma_i(\epsilon) \, \gamma_i \, d_i.$$
\end{proof}

\begin{remark}
Since $\gamma_i(\epsilon) < \gamma_i$ for all $\epsilon < 1$, the per-area gap satisfies
\[
\Delta_i(\epsilon) < \frac{1-\epsilon}{\epsilon} \, \gamma_i^2 \, d_i.
\]
This upper bound depends only on the full-data shrinkage factor $\gamma_i$, providing a simple diagnostic. Practitioners can fit each model on full data to compute $\gamma_i$ and compare bounds without evaluating each candidate model at multiple values of $\epsilon$.
\end{remark}

The thinning gap vanishes as $\epsilon$ approaches 1, in which case $\tilde{\theta}_i^{(1)}$ and $\tilde{\theta}_i$ are estimates from nearly identical data. More critically, the thinning gap is model-dependent. Different candidate models imply different shrinkage levels $\gamma_i$, and hence different gaps. Models relying less on random effects, i.e., with smaller $\sigma^2$, incur smaller gaps, while models with stronger random effects systematically appear worse under thinned-data validation than they would perform on full data. The implications of this model-dependent gap for model selection are examined in Section~\ref{sec:fundamental_tension}.

\begin{figure}[htbp]
  \centering
  \includegraphics[width=0.9\linewidth]{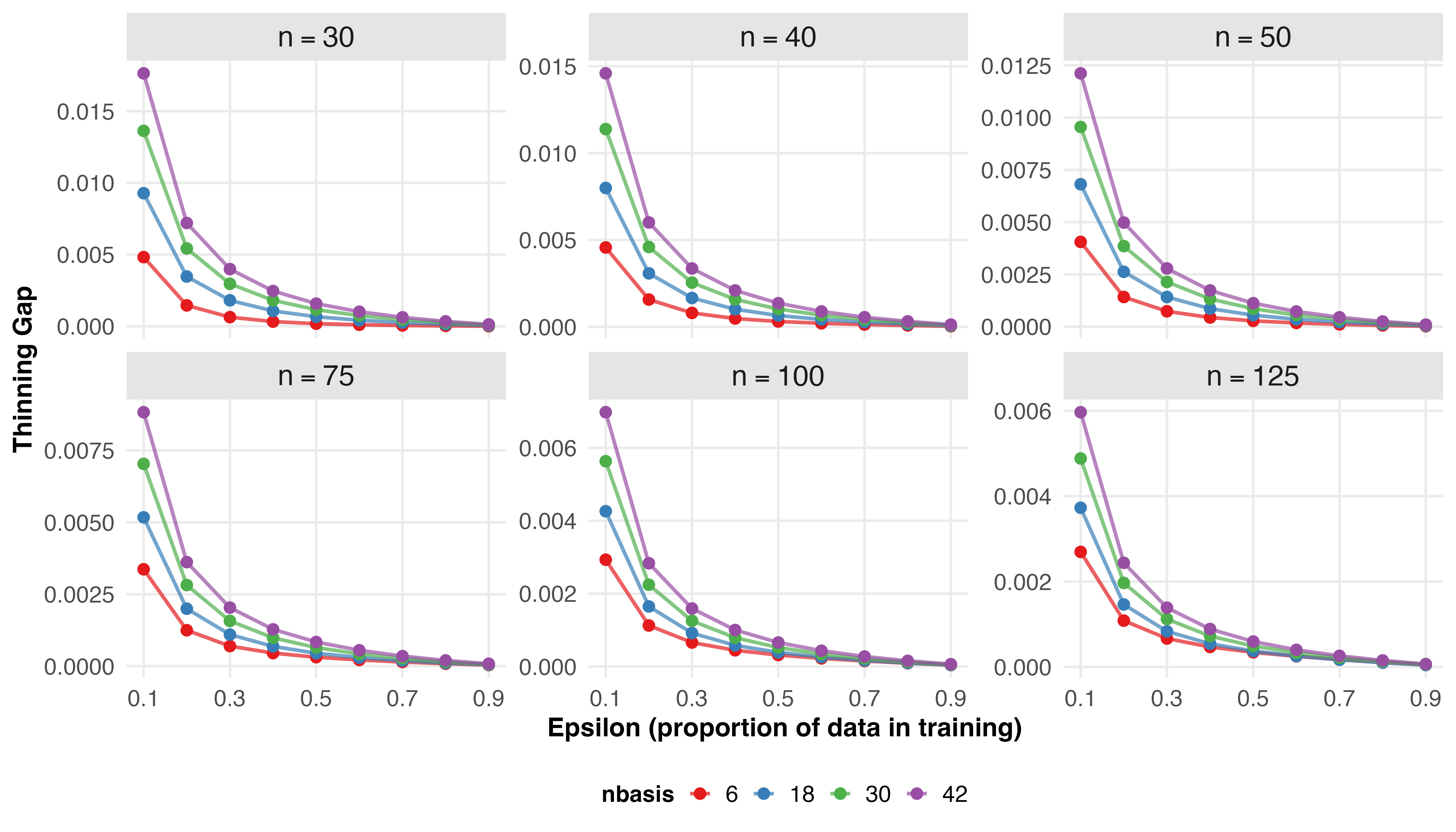}
\caption{Average realized thinning gap for Fay--Herriot models with $p = 6, 18, 30, 42$ spatial
basis functions, averaged over 50 independent samples. Each panel corresponds to an equal allocation design with the indicated target $n$. Complex models (higher $p$) exhibit larger gaps, particularly at low $\epsilon$.}
  \label{fig:systematic-gap}
\end{figure}

Proposition \ref{prop:thinning_gap} assumes known parameters. In practice, parameters must be estimated, introducing an additional effect on the thinning gap that varies with model complexity. Consider the special case of the Fay--Herriot when $\sigma^2$ is known but where $\beta$ must be estimated. Under this setting, the Best Linear Unbiased Predictor (BLUP) is
\[
\tilde{\theta}_i = \gamma_i \, y_i + (1-\gamma_i) \, x_i^\top \tilde{\beta}
\]
where $\tilde{\beta}$ is the weighted least-squares estimator.  We extend Proposition \ref{prop:thinning_gap} for this BLUP case.

\begin{proposition}
\label{prop:thinning_gap2}
(MSE thinning gap under estimated $\beta$) Under the correctly specified Fay--Herriot model with known $\sigma^2$ but estimated $\beta$,
\[\msethin - \msefull = \frac{1}{m} \sum_{i=1}^m \Delta_i(\epsilon),\]
where
$$\Delta_i(\epsilon) = \frac{1-\epsilon}{\epsilon} \cdot \gamma_i(\epsilon) \, \gamma_i \, d_i + \big[g_{2i}(\epsilon) - g_{2i}\big],$$
with
$$g_{2i}(\epsilon) = (1-\gamma_i(\epsilon))^2 \cdot x_i^\top \left[\sum_{j=1}^m \frac{x_j x_j^\top}{\sigma^2 + d_j/\epsilon}\right]^{-1} x_i.$$
and $g_{2i} := g_{2i}(\epsilon=1)$ denoting the full-data case.

Under the intercept-only model this simplifies to
$$\Delta_i(\epsilon) = \frac{1-\epsilon}{\epsilon} \cdot \gamma_i(\epsilon) \, \gamma_i \, d_i + \left[\frac{(1-\gamma_i(\epsilon))^2}{w(\epsilon)} - \frac{(1-\gamma_i)^2}{w}\right].$$
where $w(\epsilon) = \sum_{j=1}^m (\sigma^2 + d_j/\epsilon)^{-1}$ and $w := w(\epsilon=1)$, again denoting the full-data case.
\end{proposition}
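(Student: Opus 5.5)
The plan is to compute the full-data and thinned-data MSE of the BLUP separately, each via the standard Prasad--Rao decomposition with $\sigma^2$ known, and then subtract. Only $g_1$ and $g_2$ terms appear, since no $g_3$ arises when $\sigma^2$ is known. For the full data, write
\[
\tilde\theta_i-\theta_i=(\gamma_i y_i+(1-\gamma_i)x_i^\top\beta-\theta_i)+(1-\gamma_i)x_i^\top(\tilde\beta-\beta).
\]
The first term is the known-$\beta$ error from Proposition~\ref{prop:thinning_gap}, with expected square $\gamma_i d_i$. The second term has expected square $(1-\gamma_i)^2x_i^\top\operatorname{Var}(\tilde\beta)x_i$, where $\tilde\beta=(\sum_j x_jx_j^\top/(\sigma^2+d_j))^{-1}\sum_j x_j y_j/(\sigma^2+d_j)$. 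Under the marginal model $y_j\sim N(x_j^\top\beta,\sigma^2+d_j)$ this gives $\operatorname{Var}(\tilde\beta)=(\sum_j x_jx_j^\top/(\sigma^2+d_j))^{-1}$, which produces $g_{2i}$.

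The key step is showing that the cross term vanishes. I would use the standard fact that the BLUP error $\tilde\theta_i-\theta_i$ with known $\beta$ is uncorrelated with every linear contrast of $y$ that has mean zero, in particular $\tilde\beta-\beta$. Concretely, $\tilde\beta-\beta$ is a linear function of $y-X\beta$. Moreover $\operatorname{Cov}(\gamma_i y_i+(1-\gamma_i)x_i^\top\beta-\theta_i,\,y_j)=0$ for all $j$. For $j\neq i$ this holds by independence. For $j=i$ it equals $\gamma_i(\sigma^2+d_i)-\sigma^2=0$, since $\theta_i-x_i^\top\beta=u_i$ and $\gamma_i(\sigma^2+d_i)=\sigma^2$. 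Hence the expected squared error is $\gamma_i d_i+g_{2i}$. This orthogonality is the main point needing care, because the MSE is taken under the joint model distribution (averaging over $u$), as in Proposition~\ref{prop:thinning_gap}.

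For the thinned case, I would apply the identical argument to the rescaled data $y^{(1)}_i/\epsilon\sim N(\theta_i,d_i/\epsilon)$, independent across $i$. This is legitimate because the marginal distribution of the training component is exactly a Fay--Herriot dataset with variances $d_i/\epsilon$. Replacing $d_j$ by $d_j/\epsilon$ everywhere gives $\gamma_i(\epsilon)d_i/\epsilon+g_{2i}(\epsilon)$.

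Subtracting the two expressions, the first pieces combine exactly as in the algebra of Proposition~\ref{prop:thinning_gap} into $\frac{1-\epsilon}{\epsilon}\gamma_i(\epsilon)\gamma_i d_i$, and averaging over $i$ gives the stated gap. For the intercept-only case, take $x_i=1$. Then $\sum_j x_jx_j^\top/(\sigma^2+d_j/\epsilon)=w(\epsilon)$ is a scalar, and $g_{2i}(\epsilon)=(1-\gamma_i(\epsilon))^2/w(\epsilon)$, which yields the simplified formula directly.
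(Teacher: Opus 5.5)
Your proposal is correct and follows the same route as the paper. You apply the Prasad--Rao decomposition $g_{1i}+g_{2i}$ to the full-data BLUP and to the thinned BLUP (with $d_j$ replaced by $d_j/\epsilon$), subtract, reuse the algebra of Proposition~\ref{prop:thinning_gap} for the first bracket, and set $x_i=1$; the only difference is that you verify explicitly that the cross term between the known-$\beta$ error and $\tilde\beta-\beta$ vanishes, whereas the paper simply cites \citet{prasad_estimation_1990} for the decomposition.
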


\begin{corollary}
\label{cor:thinning_gap_monotone}
(Term-wise monotonicity of the thinning gap) Under the conditions of Proposition~\ref{prop:thinning_gap2} and a full-rank covariate matrix $X$, the systematic gap $\frac{1}{m}\sum_{i=1}^m \tfrac{1-\epsilon}{\epsilon}\gamma_i(\epsilon)\gamma_i d_i$ and the parameter-uncertainty gap $\frac{1}{m}\sum_{i=1}^m [g_{2i}(\epsilon) - g_{2i}]$ are each strictly decreasing in $\epsilon$ on $(0,1]$. In particular, the total thinning gap $\msethin - \msefull$ is strictly decreasing in $\epsilon$.
\end{corollary}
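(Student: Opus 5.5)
The plan is to treat the two terms separately and show that each summand, viewed as a function of $\epsilon$ on $(0,1]$, is strictly decreasing, or at least nonincreasing with strict decrease for at least one area. Averaging over $i$ then preserves strict monotonicity. The total gap is the sum of the two averages, so it is strictly decreasing as well. Throughout I use $\sigma^2>0$ and $d_i>0$, which are implicit in the model.

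For the systematic gap, I would not work with the product form $\frac{1-\epsilon}{\epsilon}\gamma_i(\epsilon)\gamma_i d_i$ directly. Instead I use the intermediate expression from the proof of Proposition~\ref{prop:thinning_gap}:
\[
\frac{1-\epsilon}{\epsilon}\gamma_i(\epsilon)\gamma_i d_i = \frac{d_i\,\sigma^4(1-\epsilon)}{(\epsilon\sigma^2+d_i)(\sigma^2+d_i)} .
\]
On $(0,1]$ the numerator is nonnegative and strictly decreasing in $\epsilon$. The denominator is positive and strictly increasing in $\epsilon$. Hence the ratio is strictly decreasing for every $i$: for $\epsilon_1<\epsilon_2$, the numerator at $\epsilon_1$ is strictly larger and the denominator strictly smaller, and positivity of the numerator at $\epsilon_1<1$ makes the inequality strict. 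Averaging over $i$ gives the first claim.

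For the parameter-uncertainty gap, $g_{2i}$ is a constant, so it suffices to show that $\frac{1}{m}\sum_i g_{2i}(\epsilon)$ is strictly decreasing. I factor $g_{2i}(\epsilon)=a_i(\epsilon)\,b_i(\epsilon)$, where
\[
a_i(\epsilon)=(1-\gamma_i(\epsilon))^2=\Big(\frac{d_i}{\epsilon\sigma^2+d_i}\Big)^2, \qquad b_i(\epsilon)=x_i^\top A(\epsilon)^{-1}x_i, \qquad A(\epsilon)=\sum_{j=1}^m \frac{x_jx_j^\top}{\sigma^2+d_j/\epsilon}.
\]

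1. The factor $a_i$ is positive and strictly decreasing in $\epsilon$, by inspection of the formula above.
2. Write $A(\epsilon)=X^\top W(\epsilon)X$ with $W(\epsilon)$ diagonal. Each weight $(\sigma^2+d_j/\epsilon)^{-1}=\epsilon/(\epsilon\sigma^2+d_j)$ is strictly increasing in $\epsilon$.
3. For $\epsilon_1<\epsilon_2$, the difference $A(\epsilon_2)-A(\epsilon_1)=X^\top D X$ has $D$ diagonal with strictly positive entries. Since $X$ has full column rank, this difference is positive definite, and each $A(\epsilon)$ is positive definite.
4. I then invoke the standard Loewner-order fact that $A\succ B\succ 0$ implies $B^{-1}\succ A^{-1}$. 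This follows from $B^{-1/2}AB^{-1/2}\succ I$, so its inverse is $\prec I$. It gives $b_i(\epsilon_2)<b_i(\epsilon_1)$ whenever $x_i\neq 0$, and $b_i\equiv 0$ when $x_i=0$.

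Consequently, for every $i$ with $x_i\neq 0$, $g_{2i}(\epsilon)$ is a product of two positive strictly decreasing functions, hence strictly decreasing. For $x_i=0$ it is identically zero. Full rank of $X$ guarantees that at least one $x_i\neq 0$, so the average is strictly decreasing.

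The step I expect to need the most care is step 4, the matrix-monotonicity argument. Weakly decreasing $b_i$ is immediate, but strictness requires the positive-definite increment, which is exactly where the full-rank assumption on $X$ enters. The rest is elementary one-variable calculus on explicit rational functions. If a more uniform argument were wanted, I would differentiate instead, using $\frac{d}{d\epsilon}A^{-1}=-A^{-1}A'(\epsilon)A^{-1}$ with $A'(\epsilon)=X^\top W'(\epsilon)X\succ 0$; this gives the same conclusion.
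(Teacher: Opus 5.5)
Your proof is correct and has the same structure as the paper's: the same split into systematic and parameter-uncertainty terms, the same factorization $g_{2i}(\epsilon)=a_i(\epsilon)b_i(\epsilon)$, and the same use of full column rank to make the increment $X^\top D X$ positive definite. The differences are minor: you compare values at $\epsilon_1<\epsilon_2$ (via $A\succ B\succ 0 \Rightarrow B^{-1}\succ A^{-1}$) where the paper differentiates using $(M^{-1})'=-M^{-1}M'M^{-1}$, and you explicitly handle rows with $x_i=0$, which the paper tacitly excludes.
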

Proofs of Proposition~\ref{prop:thinning_gap2} and Corollary~\ref{cor:thinning_gap_monotone} are given in Appendices~\ref{app:thinning_gap2} and~\ref{app:thinning_gap_monotone}, respectively.

When $\beta$ is estimated, the thinning gap acquires an additional term reflecting uncertainty in $\beta$ (Proposition~\ref{prop:thinning_gap2}). Both terms are strictly decreasing in $\epsilon$ (Corollary~\ref{cor:thinning_gap_monotone}) but they respond differently to model complexity. Models with stronger shrinkage have a smaller known-parameter gap, but place more weight on the regression component $x_i^\top\tilde{\beta}$, inflating the $(1-\gamma_i)^2$ and $(1-\gamma_i(\epsilon))^2$ factors and making the estimator more sensitive to parameter estimation error. These effects oppose each other, and their net balance is not analytically straightforward. Proposition~\ref{prop:g2i_covariates} in the Appendix shows that, for nested models with fixed $\sigma^2$, the error from estimating the regression component is non-decreasing in the number of covariates $p$.

Figure~\ref{fig:systematic-gap} illustrates both results empirically using Fay--Herriot models with $p = 6, 18, 30, 42$ spatial basis functions across six equal allocation survey designs, where Poisson sampling yields expected within-area sample size $\E{n_i}=n$ for all areas $i$ (see Section~\ref{sec:sim_framework} for details). The figure shows the \textit{realized} thinning gap averaged across 50 samples. The monotonic decay in $\epsilon$ confirms the corollary. More complex models (higher $p$) exhibit larger gaps at any given $\epsilon$, consistent with Proposition~\ref{prop:g2i_covariates} and suggesting that the parameter estimation cost dominates the shrinkage benefit for this particular setting. This difference is most pronounced at low $\epsilon$ and shrinks as $\epsilon$ approaches 1.


\subsection{The Estimator Variance}\label{sec:variance}

We now turn to the second term in the decomposition of \eqref{eq:tradeoff}, the variance of the MSE estimator. Proposition~\ref{prop:mse_variance} provides further decomposition of this variance.

\begin{proposition}[Variance of the MSE estimator]
\label{prop:mse_variance}
The variance of the MSE estimator over thinning splits is
\begin{align}
    &\Varthin{\msehat}  \nonumber \\
    &= \E[\ytrain]{ \Var[\ytest]{ \msehat \agiven \ytrain} }
 + \Var[\ytrain]{ \E[\ytest]{\msehat \agiven \ytrain } }  \label{eq:var_lotv} \\
  &=  \frac{2}{m^2}\sum_{i=1}^m \left(
\left(\tfrac{d_i}{1-\epsilon}\right)^{\!2} + 2\,\tfrac{d_i}{1-\epsilon}\,\E[\ytrain]{(\thetatrain_i - \theta_i)^2}
\right) + \Var[\ytrain]{\frac{1}{m}\sum_{i=1}^m \left( \thetatrain_i - \theta_i \right)^2}.
\label{eq:var_closedform}
\end{align}
\end{proposition}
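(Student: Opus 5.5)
The plan is to apply the law of total variance, conditioning on $\ytrain$. This gives \eqref{eq:var_lotv} directly, so the work lies in evaluating the two pieces. I will reuse the notation from the proof of Theorem~\ref{thm:unbiased_mse}: $\delta_i := \thetatrain_i - \theta_i$, which is a function of $\ytrain$ only, and $\eta_i := \tfrac{1}{1-\epsilon}\ytest_i - \theta_i$. Marginally, the $\eta_i$ are independent $N(0, d_i/(1-\epsilon))$. Since $\ytrain$ and $\ytest$ are marginally independent, and areas are independent, the conditional law of $\eta$ given $\ytrain$ is this same product Gaussian. This is the key structural fact. With this notation,
\[
\msehat = \frac{1}{m}\sum_{i=1}^m \Big[(\delta_i - \eta_i)^2 - \tfrac{d_i}{1-\epsilon}\Big].
\]

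\textbf{Second term.} This term is essentially immediate from the computation already done in Theorem~\ref{thm:unbiased_mse}. Conditionally on $\ytrain$, $\E[\ytest]{(\delta_i-\eta_i)^2 \given \ytrain} = \delta_i^2 + d_i/(1-\epsilon)$. Hence $\E[\ytest]{\msehat \given \ytrain} = \frac{1}{m}\sum_i \delta_i^2$, whose variance over $\ytrain$ is exactly the last term of \eqref{eq:var_closedform}.

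\textbf{First term.} Conditionally on $\ytrain$, the summands $(\delta_i-\eta_i)^2$ are independent across $i$, because the $\eta_i$ are independent and the $\delta_i$ are fixed. The conditional variance is therefore $\frac{1}{m^2}\sum_i \Var{(\delta_i-\eta_i)^2 \given \ytrain}$. For a fixed shift $\delta$ and $\eta \sim N(0,s)$ with $s = d_i/(1-\epsilon)$, the Gaussian moments $\E{\eta^3}=0$ and $\E{\eta^4}=3s^2$ give
\[
\Var{(\delta-\eta)^2} = \Var{\eta^2 - 2\delta\eta} = 2s^2 + 4\delta^2 s .
\]
Equivalently, this is the variance of a scaled noncentral $\chi^2_1$. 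Taking $\E[\ytrain]{\cdot}$ replaces $\delta_i^2$ by $\E[\ytrain]{(\thetatrain_i-\theta_i)^2}$, which yields $\frac{2}{m^2}\sum_i\big(s_i^2 + 2 s_i \E[\ytrain]{\delta_i^2}\big)$, as claimed.

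The main obstacle is justifying the conditional-independence step. Data thinning only guarantees marginal independence of $\ytrain_i$ and $\ytest_i$ per area, so I will argue explicitly that the vector pair $(\ytrain,\ytest)$ is jointly Gaussian and uncorrelated across all coordinates: areas are thinned independently, and within an area the covariance is zero. Joint Gaussianity with zero correlation gives independence of the full vectors, so $\ytest \indep \ytrain$. This licenses treating $\eta$ as independent $N(0,s_i)$ given $\ytrain$, and also kills the cross-covariances between areas in the conditional variance. No assumption on the form of the estimator $\thetatrain$ is needed beyond its being a measurable function of $\ytrain$ with finite fourth moment, which ensures the variance terms are finite.
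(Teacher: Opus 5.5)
Your proposal is correct and follows the paper's proof essentially step for step: the law of total variance conditioned on $\ytrain$, the same $\delta_i,\eta_i$ notation, the expansion $\Var{\eta_i^2} + 4\delta_i^2\Var{\eta_i}$ with the odd-moment cross term vanishing, and the reduction of the conditional mean to $\frac{1}{m}\sum_i \delta_i^2$. Two of your steps are small additions in rigor that the paper leaves implicit when it simply asserts $\delta_i \indep \eta_i$: your argument that the full vectors are independent (jointly Gaussian, zero cross-covariance, areas thinned independently), and your finite-fourth-moment caveat on $\thetatrain$.
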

The full derivation is provided in Appendix~\ref{app:var_deriv}. The variance decomposes into two components via the law of total variance. The first is the \emph{test-set variability}, capturing randomness from the held-out $\ytest$ given fixed training data. The second is the \emph{training-set variability}, capturing fluctuation in estimates across different training splits. In the closed form~\eqref{eq:var_closedform}, the summation corresponds to test-set variability while the final variance term corresponds to training-set variability. The test-set component decomposes further into a purely test-driven term and an interaction term that is amplified for areas with larger training errors. Note that the training-set variability here is the variability of the thinned-data oracle MSE across different realizations of $\ytrain$.

\begin{figure}[htbp]
    \centering
    \includegraphics[width=0.9\linewidth]{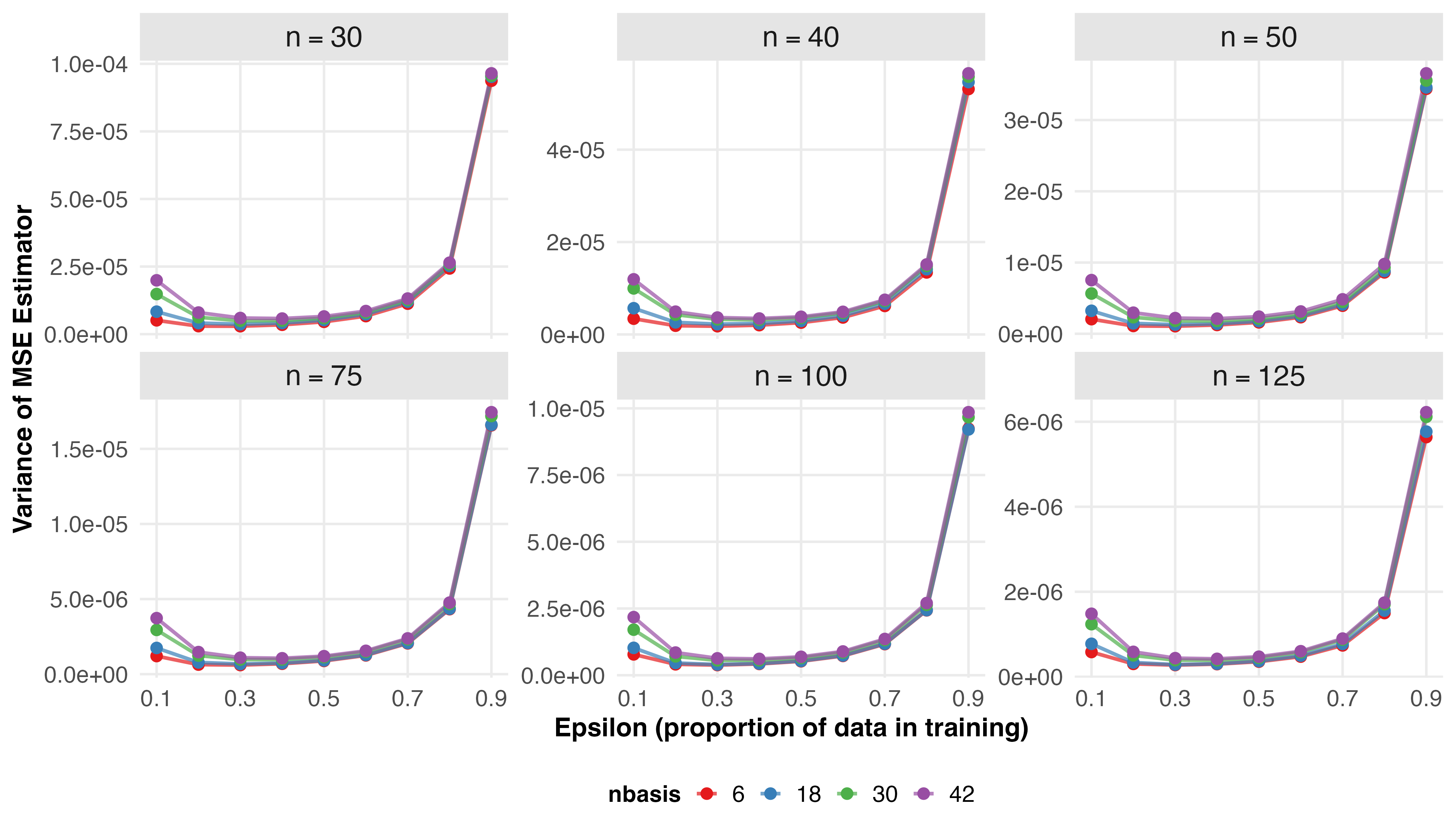}
    \caption{Variance of the MSE estimator for Fay--Herriot models with $p = 6, 18, 30, 42$ spatial basis functions, computed across 50 independent samples. Each panel corresponds to an equal allocation survey design with the indicated sample size per area. The variance is minimized  at $\epsilon \approx 0.3$--$0.4$, with notable increases for $\epsilon \geq 0.8$.}
    \label{fig:dt-variance}
\end{figure}

Similar to the thinning gap, the estimator variance is model-dependent. To understand the behavior of the variance, it is useful to consider the special case where we use the direct estimator $\hat{\theta}_i^{(1)} := \ytrain_i/\epsilon$ as an estimate of $\theta_i$. In this case, the variance simplifies to
\[
\Varthin{\msehat} = \frac{2}{m^2}\sum_{i=1}^m \frac{d_i^2}{\epsilon^2(1-\epsilon)^2},
\]
which is minimized at $\epsilon = 1/2$ (see Appendix~\ref{app:direct_var_opt}). When no shrinkage is involved, the test-set and training-set variability balance exactly at $\epsilon = 1/2$. This provides a baseline which helps us understand the following results.

\begin{proposition}[Variance-minimizing $\epsilon$ for the Fay--Herriot model with known parameters]
\label{prop:fh_var_minimize}
Under the Fay--Herriot model with known $\beta$ and $\sigma^2$, the variance is minimized at some $\epsilon^* \in (0, 1/2)$ and strictly increasing for $\epsilon \in [1/2, 1)$.

Moreover, the area-specific variance contribution is minimized at
\[
\epsilon_i^* \;=\; \max\!\left\{\,0,\; \frac{1}{2} - \frac{d_i}{2\sigma^2}\right\}.
\]
\end{proposition}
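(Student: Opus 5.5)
The plan is to reduce everything to the closed form \eqref{eq:var_closedform} of Proposition~\ref{prop:mse_variance}, evaluated for the known-parameter shrinkage estimator $\thetatrain_i = \tilde{\theta}_i^{(1)} = \gamma_i(\epsilon)\,\tfrac{1}{\epsilon}\ytrain_i + (1-\gamma_i(\epsilon))\,x_i^\top\beta$. Writing $u_i = \theta_i - x_i^\top\beta$ and $g = \gamma_i(\epsilon)$, the training error is
\[
\delta_i = g\,(\ytrain_i/\epsilon - \theta_i) - (1-g)\,u_i \sim \N{-(1-g)u_i}{g^2 d_i/\epsilon}.
\]
The $\delta_i$ are independent across areas.

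Both pieces of \eqref{eq:var_closedform} are then explicit Gaussian moments.
\begin{itemize}
\item The test-set term uses $\E{\delta_i^2} = v_i + \mu_i^2$, where $v_i = g^2 d_i/\epsilon$ and $\mu_i = (1-g)u_i$.
\item The training-set term is $\tfrac{1}{m^2}\sum_i \Var{\delta_i^2} = \tfrac{1}{m^2}\sum_i (2v_i^2 + 4\mu_i^2 v_i)$, by independence across areas.
\end{itemize}
To match ``correctly specified'' and the Prasad--Rao risk $\gamma_i(\epsilon)d_i/\epsilon$ used in Proposition~\ref{prop:thinning_gap}, I would average $u_i^2$ to $\sigma^2$. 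Then $\E{\delta_i^2} = \sigma^2 d_i/(\sigma^2\epsilon + d_i)$. With $s=\sigma^2$ and $e=\epsilon$, the per-area contribution, up to the common factor $2/m^2$, becomes
\[
f_i(e) = \frac{d_i^2}{(1-e)^2} + \frac{2 d_i^2 s}{(1-e)(se+d_i)} + \frac{s^3 e\, d_i^2\,(se+2d_i)}{(se+d_i)^4}.
\]
The first two terms are the test-set variability; the last combines $v_i^2$ and $2\mu_i^2 v_i$.

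Next I would compare with the direct-estimator baseline of Appendix~\ref{app:direct_var_opt}, which is minimized at $1/2$. Shrinkage makes the training-set term far smaller than $d_i^2/\epsilon^2$ near $0$: it vanishes as $e\to 0$ instead of blowing up. Meanwhile the test-set terms still diverge like $(1-e)^{-2}$. This asymmetry is what moves the optimum below $1/2$.

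Concretely, I would differentiate $f_i$ and show two things.
\begin{itemize}
\item $f_i'(e) > 0$ for all $e \in [1/2,1)$. The test-set terms are increasing on $(0,1)$. For the training term I would bound its possibly negative derivative by the positive derivative of $d_i^2/(1-e)^2$ on that range.
\item The sign of $f_i'$ on $(0,1/2)$ changes exactly once, at $e = \tfrac12 - \tfrac{d_i}{2s}$. This point is where $se + d_i = (s+d_i)/2$, which is why I would try the substitution $t = se + d_i$ first.
\end{itemize}
When $d_i \ge s$ the candidate root is nonpositive, so $f_i$ is increasing on all of $(0,1)$, which gives the $\max\{0,\cdot\}$.

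Summing over $i$ gives the global claims. Strict increase of each $f_i$ on $[1/2,1)$ makes the total variance strictly increasing there. Each $f_i$ is nonincreasing on $(0,\epsilon_i^*]$ and nondecreasing afterwards, so the minimizer of the sum lies in $[\min_i\epsilon_i^*, \max_i\epsilon_i^*] \subset [0,1/2)$.

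The main obstacle is the single-crossing step. Clearing the denominators $(1-e)^2$ and $(se+d_i)^5$ leaves a polynomial in $e$ of moderate degree. Showing it factors with root $\tfrac12 - \tfrac{d_i}{2s}$ and has constant sign otherwise may require grouping terms carefully, or may reveal that the stated closed form holds only under a particular convention for the training term (for instance, keeping only the dominant $v_i^2$ part). I would check that first.

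A secondary subtlety is the boundary. If all $\epsilon_i^* = 0$, the variance is increasing on $(0,1)$ and its infimum sits at $\epsilon \to 0$. Then ``minimized at some $\epsilon^* \in (0,1/2)$'' must be read as allowing the limit, or it needs $\sigma^2 > d_i$ for at least one area.
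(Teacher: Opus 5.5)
Your reduction to \eqref{eq:var_closedform}, your bound on $[1/2,1)$ and your aggregation over areas all work. The gap is in the per-area function you plan to differentiate: it is not the one for which the closed-form $\epsilon_i^*$ holds, so the single-crossing step fails.

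You average $u_i^2$ to $\sigma^2$ only after computing conditional moments. That gives $\E[u]{\Var{\delta_i^2 \given u_i}}$ rather than $\Var{\delta_i^2}$, and drops $\Var[u]{(1-\gamma_i(\epsilon))^2u_i^2} = 2(1-\gamma_i(\epsilon))^4\sigma^4$. With $g_i(\epsilon) := \sigma^2 d_i/(\epsilon\sigma^2+d_i)$, your expression is exactly
\[
f_i(\epsilon) = \Bigl[\frac{d_i}{1-\epsilon} + g_i(\epsilon)\Bigr]^2 - \frac{\sigma^4 d_i^4}{(\epsilon\sigma^2+d_i)^4}.
\]
The square is stationary precisely at $\frac12 - \frac{d_i}{2\sigma^2}$. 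The subtracted term adds $4\sigma^6 d_i^4/(\epsilon\sigma^2+d_i)^5>0$ to $f_i'$, so your $f_i'$ is strictly positive at the claimed root. It is also positive as $\epsilon\to 0^+$, where it equals $2(\sigma^6-\sigma^4 d_i+\sigma^2 d_i^2+d_i^3)/d_i$. Two consequences follow.
\begin{itemize}
\item For $d_i \ll \sigma^2$, your $f_i$ has a spurious local maximum at $\epsilon = O(d_i/\sigma^2)$, followed by a minimum strictly below $\frac12-\frac{d_i}{2\sigma^2}$.
\item For, e.g., $d_i = 0.9\sigma^2$, it is increasing on all of $(0,1)$, although the formula gives $0.05$.
\end{itemize}
No regrouping of the cleared polynomial yields the stated root, and neither does a ``keep only $v_i^2$'' convention. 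Separately, the cross term $2d_i^2\sigma^2/[(1-\epsilon)(\epsilon\sigma^2+d_i)]$ is not increasing on $(0,1)$: it decreases up to $\frac12-\frac{d_i}{2\sigma^2}$. This is harmless only because you use it on $[1/2,1)$.

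The missing idea is the paper's convention. Under the correctly specified model, $u_i \sim N(0,\sigma^2)$ is treated as random in every moment. Then $\delta_i = \gamma_i(\epsilon)(\ytrain_i/\epsilon-\theta_i) - (1-\gamma_i(\epsilon))u_i$ is mean-zero Gaussian with variance $\gamma_i(\epsilon)^2 d_i/\epsilon + (1-\gamma_i(\epsilon))^2\sigma^2 = g_i(\epsilon)$. Hence $\Var{\delta_i^2} = 2g_i(\epsilon)^2$, and the per-area contribution collapses to the perfect square $[d_i/(1-\epsilon)+g_i(\epsilon)]^2$.

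Only the derivative of the base matters, namely $d_i/(1-\epsilon)^2 - \sigma^4 d_i/(\epsilon\sigma^2+d_i)^2$. Over the common denominator its numerator factors as $d_i(d_i+\sigma^2)(d_i+(2\epsilon-1)\sigma^2)$. This gives strict increase on $[1/2,1)$ and the closed form at once. Equivalently, $d_i/(1-\epsilon)+g_i(\epsilon) = d_i(\sigma^2+d_i)/[(1-\epsilon)(\epsilon\sigma^2+d_i)]$, so you are maximizing a concave quadratic with vertex $(\sigma^2-d_i)/(2\sigma^2)$.

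Your fixed-$\theta$-then-averaged quantity is arguably closer to Assumption~\ref{assump:framework}, but the ``Moreover'' part is false for it. With the correct $f_i$, your aggregation argument is sharper than the paper's continuity argument. Your boundary caveat is also fair: neither argument guarantees the minimum is attained in $(0,1/2)$. Note, however, that $\sigma^2 > d_i$ for a single area does not by itself force an interior minimizer of the sum.
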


See Appendix~\ref{app:fh_var_minimize} for the proof of these results. Compared to the direct estimator, shrinkage estimators benefit less from additional training data due to the borrowing of strength across areas. This is made clear in the area-specific result, where optimum equals $1/2$ adjusted downward by $d_i/2\sigma^2$ which is one-half the ratio of the variance of the direct estimator and the regression model. For areas with weak shrinkage where $\sigma^2\gg d_i$, we have $\epsilon_i^* \approx 1/2$, which approaches the direct estimator baseline. For strong-shrinkage areas, where $\sigma^2 \leq d_i$, the optimum is zero, since the estimate for that area does not improve by incorporating the direct estimate if parameters are known.  However, in practice, the optimal value would not be zero since the parameter estimation necessitates pooling information across all areas.

Figure~\ref{fig:dt-variance} shows the empirical variance of the MSE estimator across $\epsilon$ and demonstrates that the theoretical properties hold. The variance is minimized at $\epsilon \approx 0.3$--$0.4$. The asymmetry is also clear: the variance spikes substantially for $\epsilon \geq 0.8$ as the test set shrinks. Complex models exhibit uniformly higher variance than simple models, though this difference is much less pronounced compared to the thinning gap for moderate $\epsilon>0.3$.

Crucially, these results establish that the variance of the MSE estimator is strictly increasing for the Fay--Herriot model (as well as the direct estimator) as $\epsilon$ approaches 1. This directly opposes the thinning gap, which strictly decreases in $\epsilon$ over $[0, 1]$.  The two terms in decomposition~\eqref{eq:tradeoff} pull in opposite directions.


\subsection{The Bias-Variance Tradeoff for Data Thinning}\label{sec:fundamental_tension}

The thinning gap and estimator variance results of the previous two subsections reveal competing demands on $\epsilon$. The MSE estimator, $\msehat$, is unbiased for the thinned-data target $\msethin$, but what we actually want is the full-data target $\msefull$. Closing this gap requires high $\epsilon$, yet high $\epsilon$ inflates variance of $\msehat$ as the test set shrinks. Moreover, this trade-off is model-dependent: model complexity and shrinkage behavior affect both the thinning gap and the variance. \textit{There is no single $\epsilon$ that is uniformly optimal across candidate models.}

This model-dependence has direct consequences for model comparison. Figure~\ref{fig:tradeoff} shows the sum of squared thinning gap and variance of the MSE estimator averaged across 50 samples from six equal allocation designs (as in Figures~\ref{fig:systematic-gap} and \ref{fig:dt-variance}). At low $\epsilon$, the curves are widely separated across models. The thinning gap dominates and amplifies between-model differences, systematically favoring simpler models. As $\epsilon$ increases past $0.5$, the curves converge and become relatively flat through $\epsilon \approx 0.7$. The per-model optimum differs ($\epsilon \approx 0.4$--$0.6$, increasing with model complexity), but the convergence at moderate $\epsilon$ means that all candidate models have similar estimation errors in their MSE estimates. This is the key property for fair model comparison.

\begin{figure}[htbp]
    \centering
  \includegraphics[width=0.9\linewidth]{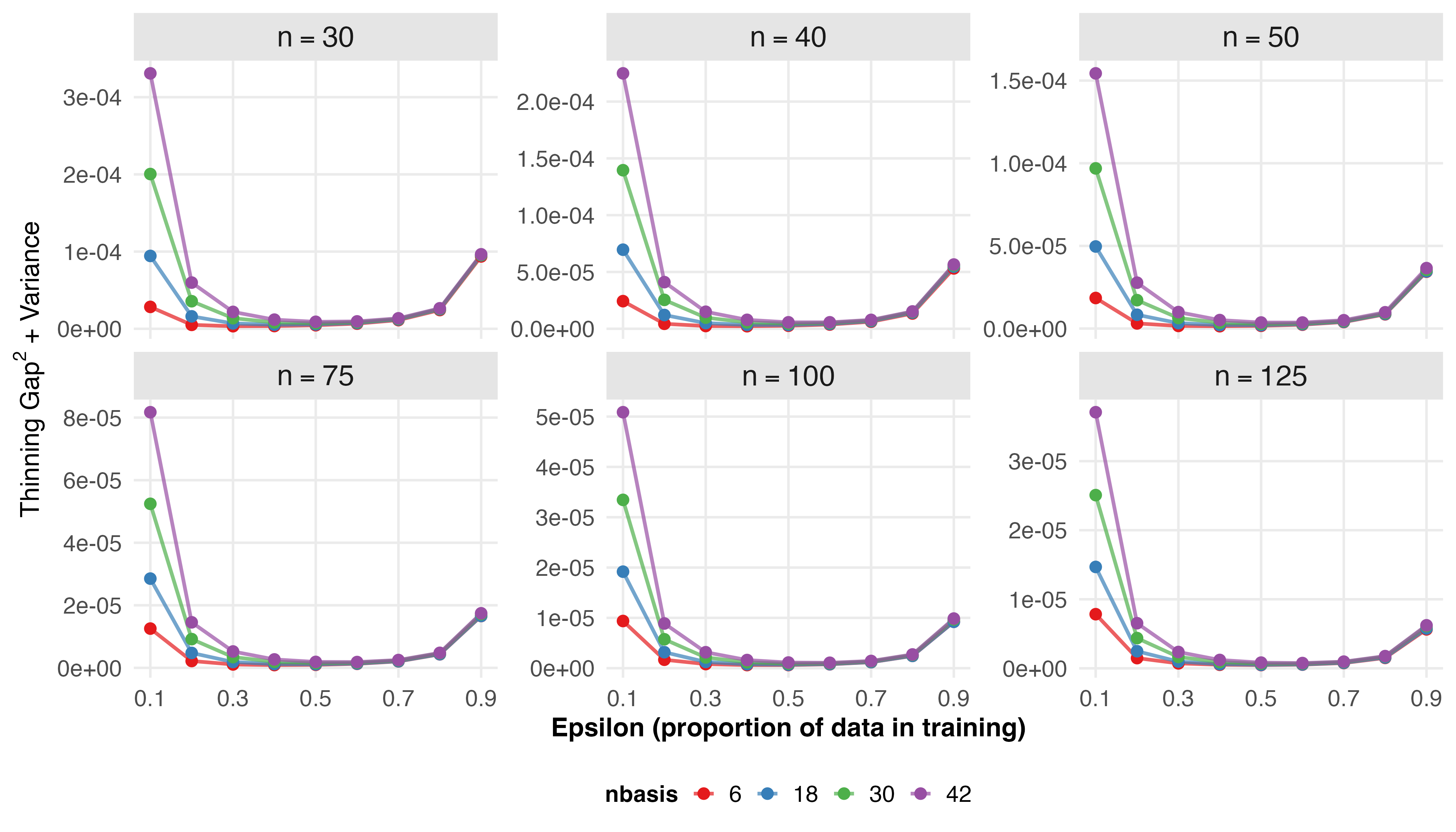}
    \caption{The thinning gap-variance trade-off for Fay--Herriot models with $p = 6, 18, 30, 42$ spatial basis functions. Curves show the sum of squared thinning gap and variance of the MSE estimator averaged across 50 samples from each design. The curves are relatively flat for $\epsilon$ between $0.4$ to $0.7$ across different designs. A log-scale version of the same plot is shown in Appendix~\ref{app:log_tradeoff} which is more helpful to see the differing optima per model and where the between-model differences in curves shrink for $\epsilon>0.5$.}
    \label{fig:tradeoff}
  \end{figure}

In principle, an unbiased estimator of $\msefull$ can be constructed by evaluating $\msehat$ at multiple $\epsilon$ values and extrapolating to $\epsilon = 1$. However, the $\epsilon$ values closest to $\epsilon = 1$, which are most informative for the extrapolation as they are nearest the target $t = 0$, are precisely where $\Var{\msehat}$ is largest. Thus, in all settings we examined, the extrapolation estimator underperformed a single $\epsilon$, suggesting that the variance introduced by extrapolation outweighs the bias it removes (Appendix~\ref{app:extrapolation}).

Given the bias-variance tradeoff and the limitations of extrapolation, we recommend a single $\epsilon \in [0.5, 0.7]$ for model comparison. Within this range, the thinning gap is moderate and similar across candidate models (\S\ref{sec:thinning_gap}), estimation variance has not yet entered the rapid-growth regime near $\epsilon = 1$ (\S\ref{sec:variance}), and the tradeoff curves in Figure~\ref{fig:tradeoff} are relatively flat, making the comparison robust to the precise choice of $\epsilon$. We examine the empirical performance of this recommendation in Section~\ref{sec:empirical}.


\section{Repeated Thinning}\label{sec:repeated}

The MSE estimator described in Section \ref{sec:mse_validation} applies the data thinning procedure only once. A natural question is whether averaging the MSE estimators over multiple thinning procedures can improve error estimates. We describe and compare two such approaches that are currently available.


The first approach to reduce the randomness from a single data thinning procedure is what we call \textit{repeated single-fold thinning}.
This approach simply repeats data thinning $R$ times at a fixed $\epsilon$ and average the $R$ separate errors, as done in the application by \citet{dharamshi_generalized_2025}. An alternative is the so-called \textit{multi-fold thinning}, which splits the data into $K > 2$ components that, marginally, are mutually independent \citep{neufeld_data_2024}. Mirroring $K$-fold cross-validation, the $k$th component is used as the test data while training proceeds on the aggregated remainder. As proposed by the authors, the training set uses $K-1$ of the $K$ components; this corresponds to training fraction $\epsilon = (K-1)/K$. Averaging over all $K$ sets yields the multi-fold error estimate. See Appendix~\ref{app:multifold} for the full algorithm for Gaussian data.

The key difference between the two methods lies in the conditional dependence structure among training-test pairs given $y_i$. Under repeated thinning, training and test sets from distinct repeats are conditionally independent. In contrast, multi-fold thinning creates $(y_i^{(1)}, \ldots, y_i^{(K)})$ that are conditionally dependent and add up to $y_i$.  See \citet{neufeld_data_2024}, Example 5, for the full details including the joint distribution of $(y_i^{(1)}, \ldots, y_i^{(K)})$. Thus, the test sets $y^{(k)}$ are conditionally dependent across the $k=1, \ldots, K$ sets. Moreover, the training sets $y_i^{(-k)} := y_i - y_i^{(k)}$ share $K-2$ components across sets and are also dependent given $y_i$.


Recall that the concurrent goals discussed in the previous section are: (i) minimize the thinning gap between the thinned-data and full-data oracle MSE quantities and (ii) reduce the variance of the MSE estimate. For a fixed $\epsilon$ and model, the thinning gap is a fixed quantity and only the estimator variance can be reduced through repeats.

For any averaged estimator derived from $J$ thinned datasets, $\msebar := J^{-1} \sum_{j=1}^J \msehat^{(j)}$, the law of total variance gives
\[
\Var{\msebar} = \underbrace{\Var[y]{\E[\ytrain, \ytest]{\msebar \given y}}}_{\text{irreducible}} + \underbrace{\E[y]{\Var[\ytrain, \ytest]{\msebar \given y}}}_{\text{reducible}}.
\]
The inner operators condition on the full data $y$ and are taken over the thinning splits; the outer operators are taken over the sampling distribution of $y$. The \emph{irreducible} component reflects variability across different realizations of the observed data $y$: no averaging scheme can reduce this term given a single dataset. The \emph{reducible} component captures variability from the thinning procedure itself, where the inner variance is conditional on a fixed $y$.

We examine the term inside the expectation for the reducible component. For repeated thinning, conditional independence yields
\[
\Var[\ytrain, \ytest]{\msebar_{\mathrm{repeat}} \given y} = \frac{1}{R} \, \Var[\ytrain, \ytest]{\msehat \given y}.
\]
For multi-fold thinning, the shared training components induce pairwise correlation between MSE estimators from different sets. Under conditional exchangeability with common correlation $\rho(y) := \Corr[\ytrain, \ytest]{\msehat^{(k)}}{\msehat^{(j)} \given y}$ for $k \neq j$, we have:
\[
\Var[\ytrain, \ytest]{\msebar_{\mathrm{multi}} \given y} = \frac{1}{K} \, \Var[\ytrain, \ytest]{\msehat \given y} \cdot \big[1 + (K-1) \rho(y) \big].
\]


Consequently, for equal computational budgets (e.g., $R=K$), repeated splitting yields smaller conditional variance than multi-fold thinning whenever $\rho(y) \ge 0$. The sign and magnitude of $\rho(y)$ for multi-fold thinning depend on the data and the model being fit.

To build intuition for why $\rho(y)$ might often be positive, consider the structure of the MSE estimator in each fold, given by
\[
\msehat^{(k)} = \frac{1}{m}\sum_{i=1}^m \left[ \left( \hat{\theta}^{(-k)}_i - \tfrac{1}{1-\epsilon}y^{(k)}_i \right)^{\!2} - \tfrac{d_i}{1-\epsilon} \right].
\]
For moderate $\epsilon \geq 0.5$, the variance of $\msehat^{(k)}$ tends to be strongly influenced by the test component (see Figure~\ref{fig:dt-variance}). Conditional on the direct estimate $y_i$, each fold component can be written as
\[
y_i^{(k)} = \frac{y_i}{K} + e_i^{(k)},
\]
where $(e_i^{(1)},\ldots,e_i^{(K)}) \given y_i$ are jointly Gaussian with mean zero and pairwise covariance $\Cov{e_i^{(k)}}{e_i^{(j)} \given y_i} = -d_i/K^2$ for $k \neq j$.
Therefore, $\Cov{(e_i^{(k)})^2}{(e_i^{(j)})^2 \given y_i} = 2d_i^2/K^4 > 0$.
This mechanism suggests that the squared terms in the MSE estimator may induce positive correlation across folds, contributing to $\rho(y) > 0$.

To examine this empirically, we compared repeated single-fold thinning ($R=5$, $\epsilon=0.8$) against multi-fold thinning ($K=5$, yielding the same training fraction $\epsilon = 0.8$). We tested the two approaches across three different equal allocation sampling designs, fitting Fay--Herriot models with varying complexity to the data. Table~\ref{tab:multifold} reports the ratio of variances for the averaged MSE estimator computed across 50 simulated samples under each design.

\begin{table}[ht]
\centering
\begin{tabular}{lcccc}
  \toprule
Design & $p=6$ & $p=18$ & $p=30$ & $p=42$ \\
  \midrule
$n=50$ & 1.23 & 1.26 & 1.29 & 1.30 \\
  $n=75$ & 1.44 & 1.49 & 1.51 & 1.63 \\
  $n=100$ & 1.58 & 1.53 & 1.60 & 1.69 \\
   \bottomrule
\end{tabular}
\caption{Variance ratio (multi-fold / repeated) for the averaged MSE estimator across equal-allocation designs, where $n$ denotes the per-area target sample size. $p$ denotes the number of spatial basis functions included as covariate effects in the Fay--Herriot model, serving as a proxy for model complexity. Ratios above 1 indicate higher variance for multi-fold thinning. Results based on 50 samples per design.}
\label{tab:multifold}
\end{table}

Variance ratios range from 1.23 to 1.69, exceeding 1 in all configurations we have examined. Although these results may differ by dataset and candidate models, this reflects our experience using multi-fold thinning, which seems to under-perform repeated thinning in many other settings. Across all designs, the penalty increases with model complexity, with ratios mostly growing monotonically from $p=6$ to $p=42$ within each row. This model-dependence compounds the challenge identified in Section~\ref{sec:thinning_gap}. Not only does the thinning gap vary across candidate models, but so does the variance reduction from averaging. With repeated thinning, variance reduction scales predictably as $1/R$ regardless of the data or model.

Based on these findings, we recommend repeated single-fold thinning as the default approach. In our experiments, $R \approx 5$ repeats are typically sufficient to stabilize the MSE estimate at modest computational cost (see Section~\ref{sec:eps_repeat}).


\section{Likelihood-Based Validation with Data Thinning}\label{sec:likelihood}

An alternative to MSE-based validation is to evaluate models using predictive log-likelihood. The ideal target is the expected log pointwise predictive density (ELPD) \citep{vehtari_practical_2017} for future observations:
\[
\mathrm{ELPD} = \sum_{i=1}^m \int \log p(\tilde{y}_i \given y)\, f(\tilde{y}_i)\, d\tilde{y}_i,
\]
where $f$ denotes the true data-generating density and $p(\cdot \given y)$ is the model-based predictive density. ELPD measures how well a fitted model predicts genuinely new data.

The challenge is that ELPD cannot be computed directly since we observe only one dataset $y$, not future realizations $\tilde{y}$. Classical information criteria instead approximate out-of-sample predictive performance using an in-sample goodness-of-fit term plus a complexity penalty $\lambda$:
\[
\mathrm{IC} = -2\log p(y \given \hat{\theta}) + \lambda.
\]
AIC \citep{akaike_new_1974} uses $\lambda = 2k$ for models with $k$ parameters, derived from asymptotic arguments under maximum likelihood. However, AIC does not naturally extend to hierarchical or Bayesian settings where the effective complexity is not simply a parameter count. DIC \citep{spiegelhalter_bayesian_2002} adapted this framework for Bayesian models using $\lambda = 2p_D$, where the effective number of parameters $p_D$ is derived from posterior variability of the deviance. WAIC \citep{watanabe_asymptotic_2010} further refined the approach by averaging over the posterior distribution rather than conditioning on a point estimate, providing better theoretical properties for singular models.

Data thinning takes a fundamentally different route. Rather than approximating out-of-sample performance from in-sample quantities, we create genuinely independent train and test sets and evaluate predictive performance directly; no penalty term is required. Under data thinning, the test observation follows $\ytest_i \sim \N{(1-\epsilon)\theta_i}{(1-\epsilon)d_i}$. Given an estimate $\thetatrain_i$ from the training set, we propose the predictive log-likelihood
\[
\ell_{\epsilon} := \sum_{i=1}^m \log \phi\!\left( \ytest_i \agiven (1-\epsilon)\thetatrain_i,\, (1-\epsilon)d_i \right),
\]
where $\phi(\cdot \mid \mu, \sigma^2)$ denotes the Gaussian density. The rescaling $(1-\epsilon)\thetatrain_i$ transforms the training estimate
of $\theta_i$ into a prediction for the test-set mean $(1-\epsilon)\theta_i$. Avoiding complications of how to specify a penalty term, this approach provides an agnostic way to evaluate point-estimates from Bayesian or frequentist models.

The inherent trade-off with this method is that the predictive target differs from the full-data ELPD. Data thinning targets a modified quantity, the thinned-data ELPD:
\[
\mathrm{ELPD}_\epsilon = \sum_{i=1}^m \int \log p(\tilde{y}^{(2)}_i \given \ytrain)\, f_\epsilon(\tilde{y}^{(2)}_i)\, d\tilde{y}^{(2)}_i,
\]
where $\tilde{y}^{(2)}_i$ denotes a hypothetical test observation and $f_\epsilon$ its marginal density under thinning. The relationship between $\mathrm{ELPD}_\epsilon$ and the full-data $\mathrm{ELPD}$ parallels the discussion in Section~\ref{sec:mse_validation}: the trade-off between the thinning gap and estimation variance applies in the likelihood setting as well.

Expanding the Gaussian log-likelihood reveals that maximizing $\ell_{\epsilon}$ is equivalent to minimizing a weighted MSE:
\[
\E[\ytest]{\ell_{\epsilon} \given \ytrain} = C - \frac{1}{2}\sum_{i=1}^m \frac{1-\epsilon}{d_i} \left( \thetatrain_i - \theta_i \right)^2,
\]
where $C$ depends only on known constants that are not model-dependent (see Appendix~\ref{app:lik_mse} for details). The weights $(1-\epsilon)/d_i$ naturally down-weight areas with large sampling variance, which makes the score more stable at the cost of being less sensitive to model performance in precisely the areas where borrowing strength matters most. This represents a potential disadvantage relative to unweighted MSE comparisons, which give equal attention to all areas regardless of direct estimate precision.


\section{Empirical Analysis and Model Comparison}\label{sec:empirical}

We now examine data thinning empirically using the spatial basis selection problem introduced in Section~\ref{sec:motivating_ex}. We first describe the simulation framework shared across the analyses in this section. We then study how the training fraction $\epsilon$ and the number of repeated thinning iterations $R$ affect model selection. Finally, we conduct a full empirical comparison, benchmarking data thinning against existing model selection methods.

\subsection{Simulation Framework}\label{sec:sim_framework}

\paragraph{Data Generation:} The California PUMS data for 2019--2023 comprises approximately $N = 1.76$ million person records across $m = 281$ Public Use Microdata Areas (PUMAs), with employment-to-population rate as the target parameter $\theta_i$ for each area. While the PUMS data is itself a sample of the full population, we treat it as a finite population for the purposes of our simulation and then subsample from it. This provides a finite-population oracle $\theta$ against which all methods can be evaluated.

Our simulation uses two types of sampling designs: equal allocation and proportional-to-population allocation. For each design, we draw $S = 50$ independent samples using stratified Poisson sampling with strata defined by PUMAs.  The within-stratum inclusion probabilities are set proportional to the person weights included with PUMS. Under equal allocation, the expected sample size is held constant across areas at $\E{n_i} \in \{30, 50, 75, 100\}$. The realized sample sizes vary even under equal allocation due to Poisson sampling (e.g., $n_i$ ranges from roughly 28 to 74 when $n = 50$). We refer to these designs by their target $n = \E{n_i}$, and use them to evaluate and illustrate the effect of thinning parameters in Section~\ref{sec:mse_validation} and Section~\ref{sec:eps_repeat}.

We also consider proportional-to-population allocation, where the inclusion probabilities are scaled to achieve overall expected sampling rates of 0.75\%, 1.25\%, and 1.75\% within each PUMA. This design reflects more common sample size variation and are used for comparing different model validation procedures in Section~\ref{sec:method_comparison}.
From each realized sample, we compute Horvitz--Thompson direct estimates $y_i$ using inverse-probability weighting and design-based variance estimates $d_i$ via Taylor linearization.

\paragraph{Candidate Models:} We consider the models described in Section~\ref{sec:motivating_ex}. Model complexity is indexed by the number of spatial basis functions $p \in \{3, 6, 9, \ldots, 60\}$. To construct spatial basis functions, we follow \citet{hughes_dimension_2013}.
Let $A$ denote the $m \times m$ binary adjacency matrix indicating shared borders between areas.  Let $P_X = X(X^\top X)^{-1}X^\top$ project onto the column space of an initial covariate matrix $X$. The Moran operator
\[
G = (I - P_X) A (I - P_X)
\]
captures spatial autocorrelation orthogonal to $X$ \citep{moran_notes_1950, hughes_dimension_2013}.
We take $G_p$ to be the $m \times p$ matrix whose columns are the eigenvectors of $G$ corresponding to the $p$ largest positive eigenvalues.

For the California PUMA geography, $G$ has $114$ positive eigenvalues out of $281$ total, and our candidate grid spans roughly 3--50\% of the available positive spectrum. The \citet{hughes_dimension_2013} heuristic of retaining 10\% of all eigenvectors would suggest $p \approx 28$, which falls near the middle of our grid.

In our application, the initial covariate matrix is simply the intercept, giving the augmented design matrix $[\mathbf{1} \mid G_p]$. The candidate models are Fay--Herriot models with IID random effects:
\[
y_i \ind \N{\theta_i}{d_i}, \quad \theta_i = x_i^\top \beta + u_i, \quad u_i \iid N(0, \sigma^2),
\]
where $x_i^\top$ is the $i$th row of $[\mathbf{1} \mid G_p]$ and $p \in \{3, 6, 9, \ldots, 60\}$. We use a Bayesian implementation and use the posterior mean as our point estimate. For the coefficients $\beta$, we place an improper flat prior $\pi(\beta)\propto 1$. For the random-effect variance, we use a proper inverse-gamma prior $\sigma^2 \sim \mathrm{IG}(a,b)$ with $a=b=0.001$. The chosen parameters provide a diffuse prior over practically relevant values.

\paragraph{Evaluation:} Here, model selection is the task of
selecting the basis count $p$ within a single nested family of
Fay--Herriot models. We evaluate the selected number of basis functions
against the \textit{average oracle basis} $p^*$, the number of basis
functions that minimizes mean squared error averaged across all $S$
simulated samples:
\[
p^* = \arg\min_p \frac{1}{S} \sum_{s=1}^{S} \sum_{i=1}^m
  \bigl(\tilde{\theta}_i^{(s)}(p) - \theta_i\bigr)^2,
\]
where $\tilde{\theta}_i^{(s)}(p)$ denotes the posterior mean for area
$i$ from the model with $p$ basis functions fitted to the $s$th sample.
The target $\theta_i$ is the fixed finite-population mean from
Assumption~\ref{assump:framework}, common to all candidate models; only
the predictor $\tilde{\theta}_i^{(s)}(p)$ varies with $p$.
Consequently, $p^*$ is not model-dependent: it is a single reference
value, determined by the data-generating truth and the survey design.
The average oracle basis is remarkably stable: $p^* = 15$ for all three
proportional allocation designs and also for equal allocation designs
except $n = 30$, where it drops to $p^* = 12$. Both values are well
below the $p \approx 28$ suggested by the \citet{hughes_dimension_2013}
heuristic.

Let $\check{p}_s$ denote the basis count selected by a given method on
sample $s = 1, \ldots, S$. We report two complementary metrics:
\begin{itemize}
  \item \textit{Root mean squared error}:
    $\mathrm{RMSE} := \bigl(S^{-1} \sum_s
    (\check{p}_s - p^*)^2\bigr)^{1/2}$, measuring typical error in the
    selected basis count.
  \item \textit{Mean bias}:
    $\mathrm{Mean\ Bias} := S^{-1} \sum_s (\check{p}_s - p^*)$, indicating
    whether a method systematically under-selects (negative) or
    over-selects (positive) relative to the oracle.
\end{itemize}
RMSE and bias together characterize a method's selection accuracy and
directional tendency.


\subsection{Effect of \texorpdfstring{$\epsilon$}{epsilon} and Repeats \texorpdfstring{$R$}{R}}\label{sec:eps_repeat}

We first examine the effect of $\epsilon$ and repeats $R$ on model
selection under the equal allocation design.
Figure~\ref{fig:basis_selection} shows the impact of the thinning
parameter $\epsilon$ and repeats $R$ on model selection from the
Fay--Herriot models using spatial basis fixed effects. The most striking
pattern in Figure~\ref{fig:basis_selection}(b) is the systematic
under-selection at small training fractions $\epsilon$. For
$\epsilon < 0.5$, the mean bias is negative across all values of $R$ and
all sample sizes, with under-selection of 7--12 basis functions on average. For small
$\epsilon$, increasing the number of repeated thinnings $R$ seems to
simply sharpen this bias. As $\epsilon$ increases, the bias decreases
and crosses zero near $\epsilon = 0.6$--$0.8$ for the larger sample
sizes.

This pattern confirms the theoretical analysis in
Section~\ref{sec:thinning_gap}. The thinning gap is larger for complex
models (Proposition~\ref{prop:thinning_gap2}) and is amplified at small
$\epsilon$ (Corollary~\ref{cor:thinning_gap_monotone}). Because the gap
penalizes complex models more heavily, validation systematically favors
models that are simpler than the oracle when the training fraction is
insufficient, leading to more conservative model choices.

The RMSE in the upper panel~(a) tells a complementary story. Although
high $\epsilon$ values reduce the oracle gap, they introduce substantial
variability that is detrimental for model selection. For $R = 1$, RMSE
increases sharply beyond $\epsilon \approx 0.5$. This high-$\epsilon$
instability reflects the variance properties highlighted in
Section~\ref{sec:variance}. As $\epsilon$ approaches 1, validation based
on $\ytest$ becomes increasingly unreliable. Repeated thinning mitigates
this effect but cannot eliminate it entirely.

Overall, we see here that the choice of $\epsilon$ is much more
important than the number of repeats $R$. The tension between the
thinning gap and variance creates a favorable range at moderate
$\epsilon$. Across all sample sizes, RMSE is minimized in the range
$\epsilon \approx 0.5$--$0.7$. Within this range, increasing $R$
consistently improves performance by reducing the variability inherent to
single splits, though the improvement from $R=3$ to $R=5$ seems to
indicate diminishing returns.

\begin{figure}[htbp]
    \centering
    \includegraphics[width=0.95\linewidth]{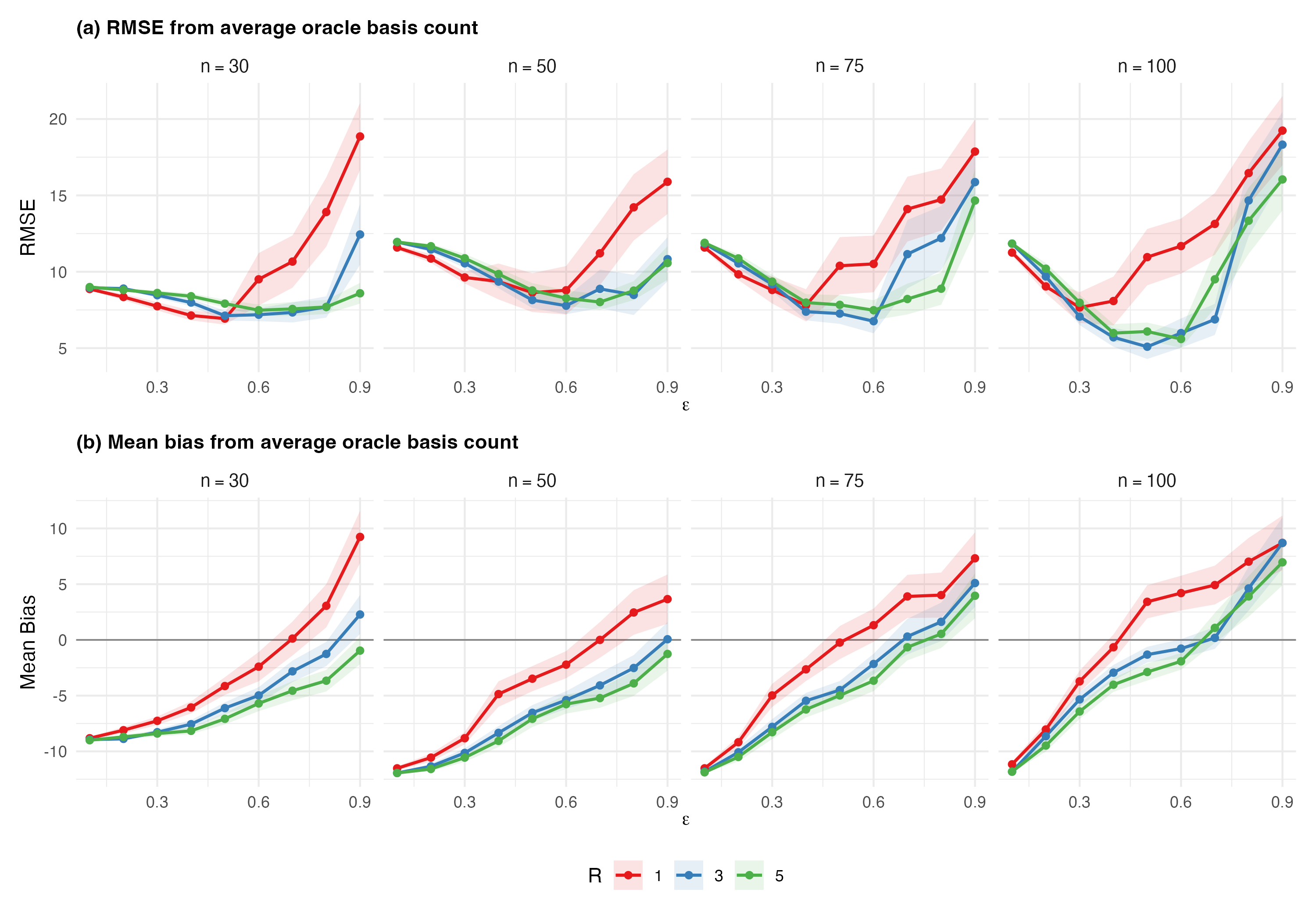}
    \caption{Effect of the training fraction $\epsilon$ and the number of repeats $R \in \{1, 3, 5\}$ on basis selection under equal-allocation designs with target sample sizes $n$. Shaded ribbons indicate $\pm 1$ standard errors of the mean, taken over 50 simulated datasets. Panel~(a): RMSE from the average oracle basis count. Panel~(b): Mean bias; negative values indicate under-selection.}
    \label{fig:basis_selection}
\end{figure}

\subsection{Comparison with Existing Methods}\label{sec:method_comparison}

We now benchmark data thinning against established model selection
approaches using the proportional-to-population allocation designs and
the candidate models described in Section~\ref{sec:sim_framework}.

\paragraph{Methods Compared:} The two data thinning approaches are the
MSE estimator (DT-MSE) from Section~\ref{sec:mse_validation} and the
negative log-likelihood score (DT-NLL) from
Section~\ref{sec:likelihood}, both with $\epsilon = 0.6$ and $R = 5$
repeats based on the analysis in Section~\ref{sec:eps_repeat}.

We compare against three established approaches. DIC
\citep{spiegelhalter_bayesian_2002} and WAIC
\citep{watanabe_asymptotic_2010} are Bayesian information criteria that
balance in-sample fit against a complexity penalty; both are discussed
further in Section~\ref{sec:likelihood}. Neither produces genuinely
out-of-sample evaluations.

The ESIM (Empirical Simulation) approach is commonly used in SAE
\citep{bradley_multivariate_2015, janicki_bayesian_2022}. The approach
generates $\ell = 1, \ldots, L$ synthetic direct estimates
$z^{(\ell)}_i := y_i + e^{(\ell)}_i$ where
$e^{(\ell)}_i \ind N(0, d_i)$, fits candidate models to each
$z^{(\ell)}$, and validates against the original direct estimates by
setting $\theta_i := y_i$. We use $L = 100$ iterations per sample. ESIM
requires only area-level summary statistics and is estimator-agnostic,
but its core assumption---that direct estimates equal true area
means---is difficult to justify in precisely the small-area settings
where model-based estimation is most needed.

\begin{table}[ht]
\centering
\begin{tabular}{l rrrr}
  \toprule
  Method & 0.75\% & 1.25\% & 1.75\% & Overall \\
  \midrule
  DT-MSE $\epsilon$=0.6 & 9.20 & 5.91 & 5.89 & 7.17 \\
  DT-NLL $\epsilon$=0.6 & 8.89 & 5.68 & 6.61 & 7.19 \\
  DIC & 6.98 & 6.79 & 10.60 & 8.31 \\
  WAIC & 6.77 & 9.49 & 14.26 & 10.63 \\
  ESIM & 10.65 & 6.77 & 3.72 & 7.60 \\
  \bottomrule
\end{tabular}
\caption{RMSE of the selected basis count from the average oracle
($p^* = 15$) by method and design across $S = 50$ simulation replicates.
Columns indicate proportional allocation designs with overall sampling
rates of 0.75\%, 1.25\%, and 1.75\%. Overall RMSE is computed by pooling all $150$ simulated datasets across the three designs.}
\label{tab:comparison}
\end{table}

Table~\ref{tab:comparison} reports RMSE from the average oracle
basis count, and Figure~\ref{fig:method_comparison} shows the
distribution of selected basis counts across $S = 50$ simulated samples
per design. DT-MSE and DT-NLL achieve the lowest overall RMSE (7.17 and
7.19), followed by ESIM (7.60). The per-design results reveal a clear pattern: DIC and WAIC perform best at low precision (0.75\% PA), ESIM dominates at high precision (1.75\% PA), and data thinning leads at 1.25\% PA while remaining competitive throughout.

The per-design results show how each method performs across the range of
sampling precision. At 0.75\% PA, where sampling noise is highest, DIC
and WAIC perform best while ESIM struggles (RMSE 10.65). At 1.25\% PA,
the data thinning methods lead: DT-NLL achieves the lowest RMSE of any
method at any design (5.68), followed closely by DT-MSE (5.91). At
1.75\% PA, ESIM dominates with RMSE of 3.72, while DIC and WAIC
deteriorate sharply (10.60 and 14.26). Notably, DT-MSE maintains nearly
the same accuracy at 1.75\% as at 1.25\% (5.89 vs.\ 5.91), while other
methods show large swings.

The patterns in Figure~\ref{fig:method_comparison} and
Table~\ref{tab:comparison} are striking. At 0.75\% PA, all methods
under-select relative to the oracle, reflecting the difficulty of model
selection when sampling noise is high. As sample size increases, the
methods diverge sharply. DIC and WAIC spread upward with outliers
reaching over 40 basis functions above the oracle at 1.75\% PA. Their
mean bias swings from under-selection to substantial over-selection
across designs. This
reflects a structural limitation of information criteria in hierarchical
models: the log-likelihood improvement from finer spatial structure
scales with $1/d_i$, while penalty terms remain bounded by the number of
areas $m$ \citep{gelman_understanding_2014}.

ESIM remains tightly concentrated below the oracle across all designs
($-10.1$ to $-1.4$), but its accuracy improves substantially as
precision increases. ESIM performs best when the direct estimates $y_i$
are close to the true parameters $\theta_i$, since its validation target
$\theta_i := y_i$ is then approximately correct. As noted in
Appendix~\ref{app:esim}, ESIM is equivalent to data fission with a misspecified
target: the correct validation target under data fission would be
$(\theta_i + y_i)/2$, not $y_i$. This mismatch is small when
$y_i \approx \theta_i$ but grows with sampling noise.

The data thinning methods sit between these extremes, with distributions
that track the oracle most closely at 1.25\% and 1.75\% PA, though with
a slight downward tendency reflecting the thinning gap inherent in
training on partial data. DT-MSE has a moderate bias swing ($-7.0$ to
$+0.3$), landing near zero at the most precise design. DT-NLL shows
notable spread at 1.75\% PA and drifts further positive ($+2.0$). As a
likelihood-based method, DT-NLL is subject to the same mechanism that
drives DIC and WAIC to over-select, but with greater protection from
out-of-sample evaluation. At 1.75\% PA, the DT and ESIM distributions
look notably similar, consistent with the connection between data
thinning and ESIM developed in Appendix~\ref{app:esim}.

\begin{figure}[ht]
    \centering
    \includegraphics[width=\textwidth]{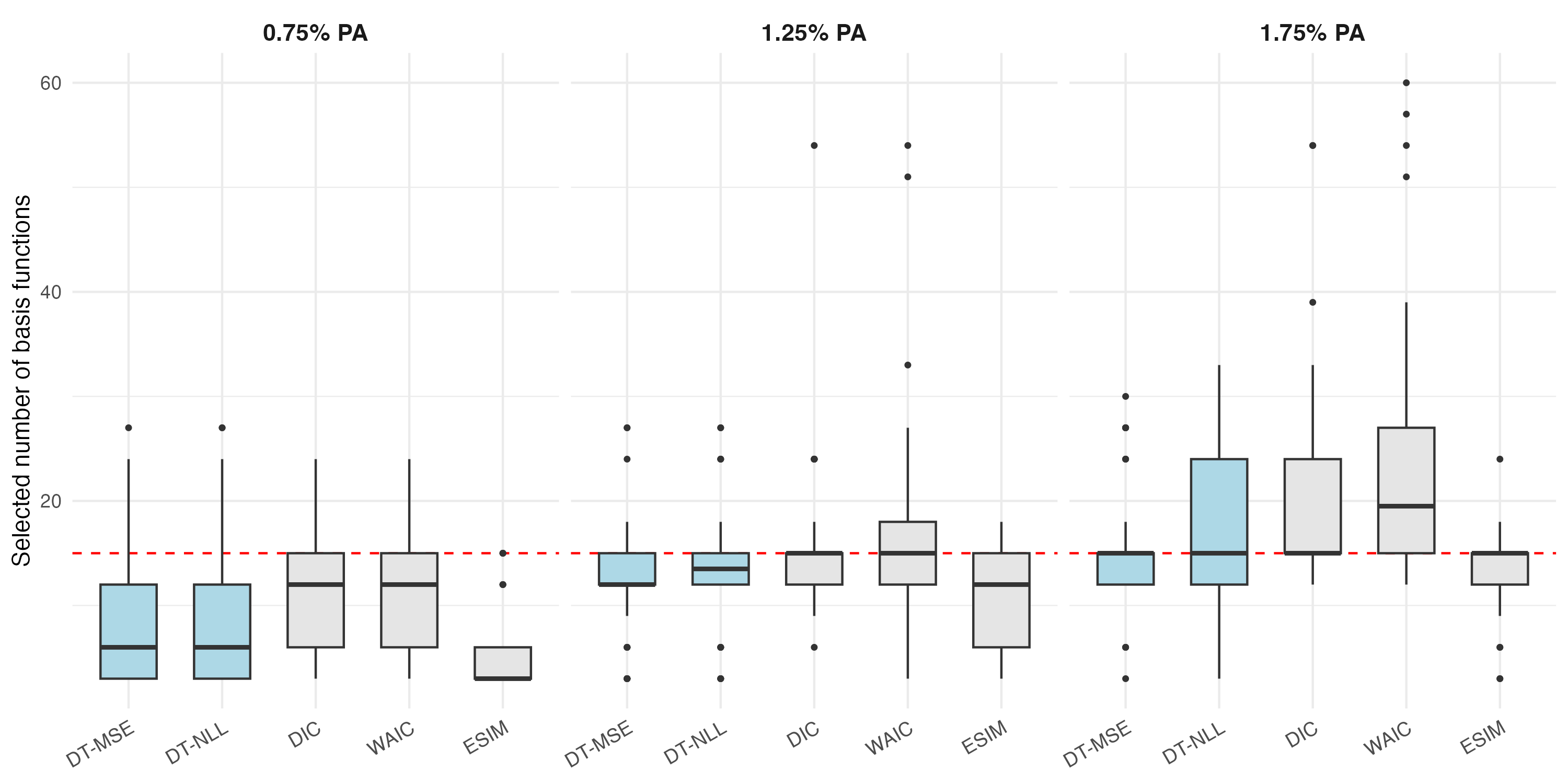}
    \caption{Distribution of selected basis function counts across
    methods and proportional allocation (PA) designs ($S = 50$ simulated
    samples per design). The dashed red line marks the average oracle
    ($p^* = 15$). Data thinning methods are tinted in light blue.}
    \label{fig:method_comparison}
\end{figure}

When sampling noise is high, information criteria are effective and
computationally cheap. Any approach that introduces additional noise,
whether through data splitting (data thinning) or synthetic perturbation
(ESIM), is likely to be detrimental in small-sample settings. However,
data thinning's advantage is reliability across the full range of
designs. Where DIC and WAIC over-select at high precision and ESIM
collapses at low precision, data thinning remains competitive
throughout. With $R = 5$ repeats, data thinning requires fitting each
candidate model only five times, compared to ESIM's $L = 100$
iterations. Unlike DIC and WAIC, which are restricted to Bayesian models, both DT and ESIM are estimator-agnostic, but data thinning requires no assumptions beyond Assumption~\ref{assump:framework}.

\section{Discussion and Future Work}\label{sec:discussion}

This paper investigated data thinning as a model validation tool for SAE. Our theoretical analysis reveals a fundamental trade-off: the thinning gap between thinned-data and full-data performance metrics decreases with the training fraction $\epsilon$, while the variance of the MSE estimator increases. This trade-off is model-dependent, with complex models incurring larger thinning gaps, and no single $\epsilon$ is uniformly optimal across candidate models.

Nevertheless, data thinning offers a unified, out-of-sample validation framework that has been sorely missing in SAE. It relies only on Gaussianity of the direct estimates and known sampling variances (Assumption~\ref{assump:framework}), both standard in area-level modeling. Unlike information criteria, it requires no penalty approximation; unlike ESIM, it makes no assumption that direct estimates equal true area means. Based on empirical analysis, we recommend $\epsilon \approx 0.5$--$0.7$ with repeated thinning $R\approx 5$, which approximately equalizes estimation errors across models while keeping variance under control. Our design-based simulations show that data thinning with the recommended settings provides competitive and consistent performance across sampling designs, avoiding the failure modes exhibited by existing methods. Although our example focused on a model selection task involving relatively simple Fay--Herriot models with spatial basis functions as covariates, data thinning can be applied to compare any area-level models satisfying Assumption~\ref{assump:framework}, including models with different random effect structures.

The theoretical framework we develop reveals properties of data thinning that we believe extend beyond SAE. The same trade-off discussed in this paper may arise in other settings where thinning is used to validate models with different complexity or shrinkage behavior on a single dataset.  More broadly, data thinning may offer a useful theoretical lens for studying model validation. Unlike cross-validation, which operates on discrete folds, data thinning provides a continuous parameter $\epsilon \in (0,1)$ governing the train-test allocation. For the family of distributions and sufficient statistics that can be thinned, the components have known, tractable distributions \citep{dharamshi_generalized_2025}. This structure enabled the closed-form thinning gap analysis in this paper and may facilitate sharper theoretical results about single-dataset validation than are available for sample-splitting approaches.

The connection between data thinning and cross-validation is worth highlighting. For example, data thinning makes the difficulty of estimating in-sample predictive error, pointed out in \citet{bates_cross-validation_2024}, extremely obvious; conditioning on the full data, the training and test sets under data thinning are perfectly negatively correlated. Motivated by data thinning, \citet{liu_cross-validation_2026} proposed a Gaussian randomization scheme for constructing train-test pairs that achieve lower variance in error estimation than standard cross-validation.

In SAE, the connection between data thinning and cross-validation arises as well. Area-level cross-validation can be viewed as the limiting case of data thinning where held-out areas receive $\epsilon_i = 0$. Our finding that optimal $\epsilon$ is model-dependent aligns with recent work by \citet{mcalinn_determining_2025}, who show that the optimal number of folds $K$ in cross-validation depends on both data and model. In SAE, this problem is compounded by heterogeneity in sampling variances $d_i$. Determining an optimal fold assignment that accounts for this heterogeneity is a challenging combinatorial problem that data thinning sidesteps entirely.

Several limitations of this work suggest directions for future research. Our theoretical results treat sampling variances $d_i$ as known. While this is standard in area-level SAE, the $d_i$ used in practice are themselves design-based estimates that introduce uncertainty not accounted for in our framework. Work by \citet{dharamshi_decomposing_2025} on data thinning with estimated variances provides relevant theoretical groundwork, though their focus is on models that explicitly estimate variance components rather than design-based variance estimation.  Our recommended $\epsilon$ is also uniform across areas, but the area-specific variance results in Proposition~\ref{prop:fh_var_minimize} suggest that optimizing $\epsilon_i$ by area could improve performance. More generally, our analysis focuses on area-level Fay--Herriot models with Gaussian likelihoods; unit-level models and non-Gaussian extensions for count data remain unexplored. Whether analogous thinning gap phenomena arise in other data thinning applications is an open question worth investigating. Constructing multi-$\epsilon$ estimators that reduce the thinning gap without inflating variance remains a direction for future work.

\section*{Acknowledgments and Disclosure Statement}

The authors used Anthropic Claude Sonnet and Opus (v4.5--4.6) to assist with coding, review of mathematical proofs, LaTeX formatting, and manuscript editing.
The authors report there are no competing interests to declare.

\section*{Reproducibility}

The code needed to reproduce the results in this paper is available at \url{https://github.com/sho-kawano/dt_basis_select} (archived at \url{https://doi.org/10.5281/zenodo.20724385}; \citealp{kawano_dt_code_2026}) and is written in R v4.5.1 \citep{r_core_team_r_2025}. All data are from
the 2019--2023 American Community Survey PUMS, publicly available from the U.S.\ Census Bureau.

Direct estimates and design-based variances are computed using the \texttt{survey} package \citep{lumley_survey_2024}. The adjacency
matrix for spatial basis construction is obtained from PUMA shapefiles via the \texttt{tigris} package \citep{walker_tigris_2023}. Fay--Herriot models are fit using a Gibbs sampler implemented via custom R code.

\bibliographystyle{apalike}
\bibliography{all_ref}


\newpage
\section{Appendix}
\subsection{Connection between Empirical Simulation and Data Fission/Thinning}
\label{app:esim}

A simulation strategy frequently used in the small area literature, which we refer to as empirical simulation (ESIM), relies solely on area-level summary statistics rather than microdata \citep{bradley_multivariate_2015, janicki_bayesian_2022}. The procedure generates synthetic direct estimates by injecting additional noise into the observed data. This mechanism shares a structural similarity with data fission \citep{leiner_data_2025}, a close relative to data thinning.

However, the two methods diverge in their validation logic. ESIM generates a perturbed training set $\ytrain_i$ but treats the original noisy estimate $y_i$ as the fixed ground truth $\theta_i$.

\begin{algorithm}[H]
\caption{Empirical Simulation Study (ESIM)}
\label{alg:esim}
\begin{algorithmic}[1]
\Require Direct estimate $y_i \sim \N{\theta_i}{d_i}$ with known variance $d_i$
\State \textbf{Assumption:} Treat observed $y_i$ as the true mean, setting $\theta_i := y_i$
\State Draw training observation $\ytrain_i \mid \theta_i \sim \N{\theta_i}{d_i}$
\State Set validation target $\ytest_i := y_i$
\State \Return Training observation $\ytrain_i$ and target $\theta_i$
\end{algorithmic}
\end{algorithm}

The plug-in step $\theta_i := y_i$ ignores the sampling error inherent in $y_i$. Data fission avoids this plug-in assumption. It generates training data via a similar noise injection (randomizing the data) but accounts for the noise in the validation step. Rather than validating against a fixed point, it validates against the conditional distribution of the remaining data given the randomized training component.

\begin{algorithm}[H]
\caption{Data Fission (Gaussian case, $\tau=1$)}
\label{alg:dependent_df}
\begin{algorithmic}[1]
\Require Direct estimate $y_i \sim \N{\theta_i}{d_i}$ with known variance $d_i$
\State Draw auxiliary noise $e_i \sim \N{0}{d_i}$
\State Construct training observation $\ytrain_i := y_i + e_i$
\State Retain original data for testing $\ytest_i := y_i$
\State \textbf{Inference:} Validate based on the conditional law:
\[
\ytest_i \mid \ytrain_i \sim \N{\frac{\ytrain_i + \theta_i}{2}}{\frac{d_i}{2}}
\]
\end{algorithmic}
\end{algorithm}

In data fission, $\ytrain_i$ and $\ytest_i$ are correlated. Data thinning simplifies this framework by transforming the components into marginally independent variables. \citet{neufeld_data_2024} show that in the Gaussian family, data fission is equivalent to data thinning up to a rescaling. Specifically, the independent splits $\ytrain_i, \ytest_i$ used in Algorithm~\ref{alg:thin} carry the same information as the correlated components in data fission, but the independence property allows for the simpler, intuitive validation procedures described in Section~\ref{sec:mse_validation}.

\subsection{Data Thinning: Two Direct Estimates from One Sample}\label{app:thinning_maps}

\begin{figure}[H]
    \centering
    \includegraphics[width=\textwidth]{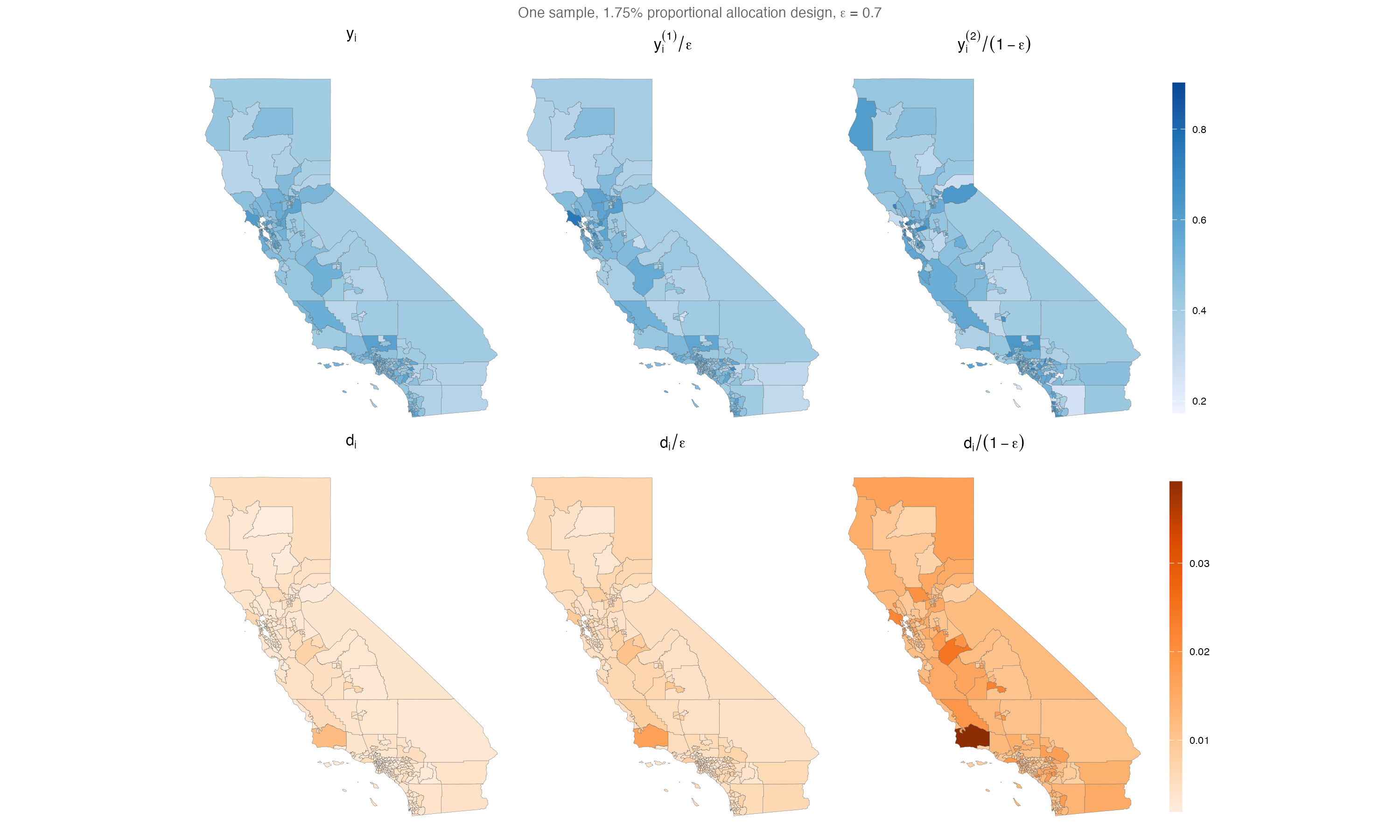}
    \caption{Visualization of how data thinning splits the original direct estimate into two. The sample is drawn using a 1.75\% proportional allocation design with $\epsilon = 0.7$ using the California PUMS data from Section~\ref{sec:motivating_ex}. Top row: the original direct estimates $y_i$ (left), the scaled training data $\ytrain_i/\epsilon$ (center), and the scaled test data $\ytest_i/(1-\epsilon)$ (right). Bottom row: the corresponding sampling variances $d_i$, $d_i/\epsilon$, and $d_i/(1-\epsilon)$.  The test-set variance inflation is visually apparent as it only has $0.3$ of the effective sample size compared to the original.}
    \label{fig:thinning_maps}
\end{figure}

\subsection{Proof of Proposition~\ref{prop:thinning_gap2}: Thinning Gap Under Estimated $\beta$}
\label{app:thinning_gap2}

The proof builds on the MSE decomposition of \citet{prasad_estimation_1990}, applying it separately to the full-data BLUP and thinned-data BLUP.

\begin{proof}
Under the Fay--Herriot model with known $\sigma^2$, the BLUP for area $i$ is
\[
\tilde{\theta}_i = \gamma_i \, y_i + (1-\gamma_i) \, x_i^\top \tilde{\beta},
\]
where $\tilde{\beta} = \left(\sum_{j=1}^m \frac{x_j x_j^\top}{\sigma^2 + d_j}\right)^{-1} \sum_{j=1}^m \frac{x_j }{\sigma^2 + d_j} y_j$ is the weighted least-squares estimator.

By the Prasad--Rao decomposition \citep{prasad_estimation_1990}, the MSE decomposes as
\[
\E{(\tilde{\theta}_i - \theta_i)^2} = g_{1i} + g_{2i},
\]
where $g_{1i} = \gamma_i d_i$ is the prediction variance (identical to the known-$\beta$ case) and
\[
g_{2i} = (1-\gamma_i)^2 \, x_i^\top \left[\sum_{j=1}^m \frac{x_j x_j^\top}{\sigma^2 + d_j}\right]^{-1} x_i
\]
captures the contribution from estimating $\beta$.

For the thinned estimator with effective sampling variance $d_i/\epsilon$, the same decomposition gives
\[
\E{(\tilde{\theta}_i^{(1)} - \theta_i)^2} = g_{1i}(\epsilon) + g_{2i}(\epsilon),
\]
where $g_{1i}(\epsilon) = \gamma_i(\epsilon) \cdot d_i/\epsilon$ and
\[
g_{2i}(\epsilon) = (1-\gamma_i(\epsilon))^2 \, x_i^\top \left[\sum_{j=1}^m \frac{x_j x_j^\top}{\sigma^2 + d_j/\epsilon}\right]^{-1} x_i.
\]
The per-area thinning gap is therefore
\[
\Delta_i(\epsilon) = \big[g_{1i}(\epsilon) - g_{1i}\big] + \big[g_{2i}(\epsilon) - g_{2i}\big].
\]
By Proposition~\ref{prop:thinning_gap}, the first bracket is positive. For the second bracket, note that $\gamma_i(\epsilon) < \gamma_i$ implies $(1-\gamma_i(\epsilon))^2 > (1-\gamma_i)^2$. Additionally, since $d_j/\epsilon > d_j$, the weights in the precision matrix $\sum_j (\sigma^2 + d_j/\epsilon)^{-1} x_j x_j^\top$ are smaller for the thinned data, making its inverse larger. Both factors are larger, so $g_{2i}(\epsilon) > g_{2i}$.

\paragraph{Intercept-only simplification.}
When $x_i = 1$ for all $i$, the matrix inverse reduces to a scalar:
\[
g_{2i} = (1-\gamma_i)^2 \cdot \frac{1}{w}, \qquad g_{2i}(\epsilon) = (1-\gamma_i(\epsilon))^2 \cdot \frac{1}{w(\epsilon)},
\]
where $w = \sum_{j=1}^m (\sigma^2 + d_j)^{-1}$ and $w(\epsilon) = \sum_{j=1}^m (\sigma^2 + d_j/\epsilon)^{-1}$.
\end{proof}


\subsection{Proof of Corollary~\ref{cor:thinning_gap_monotone}: Monotonicity of the Thinning Gap}
\label{app:thinning_gap_monotone}

In Proposition~\ref{prop:thinning_gap2}, we established that the thinning gap is
\[\msethin - \msefull = \frac{1}{m} \sum_{i=1}^m \Delta_i(\epsilon),\]
where
\[\Delta_i(\epsilon) = \underbrace{\frac{1-\epsilon}{\epsilon} \cdot \gamma_i(\epsilon) \, \gamma_i}_{\text{systematic}}\, d_i + \underbrace{\big[g_{2i}(\epsilon) - g_{2i}\big].}_{\text{parameter uncertainty}}\]
The proof proceeds by differentiating each term with respect to $\epsilon$. It suffices to show that both the (i) systematic and (ii) the parameter uncertainty gaps are strictly decreasing.

\paragraph{(i) Systematic gap}

\begin{proof}
   Given the proof of Proposition~\ref{prop:thinning_gap} we can rewrite the systematic gap as $\gamma_i(\epsilon) d_i / \epsilon - \gamma_i d_i$. The first $\epsilon$ dependent term can be re-written as
\begin{align*}
\frac{\gamma_i(\epsilon) d_i }{\epsilon} &= \frac{1}{\epsilon} \cdot \frac{\sigma^2 d_i}{\sigma^2  + d_i / \epsilon } = \frac{1}{\epsilon} \cdot \frac{ \epsilon \sigma^2  d_i}{\epsilon \sigma^2  + d_i } = \frac{\sigma^2 d_i}{\epsilon \sigma^2  + d_i }.
\end{align*}
Thus we can evaluate the derivative of the systematic gap:
\begin{align*}
\frac{d}{d\epsilon}\Big[\frac{1-\epsilon}{\epsilon} \gamma_i(\epsilon) \gamma_i d_i\Big]
&= \frac{d}{d\epsilon}\Big[\frac{\gamma_i(\epsilon) d_i}{\epsilon} - \gamma_i d_i\Big]
 = \frac{d}{d\epsilon}\Big[\frac{\gamma_i(\epsilon) d_i}{\epsilon}\Big] \\
&= \frac{d}{d\epsilon}\Big[\frac{\sigma^2 d_i}{\epsilon\sigma^2 + d_i}\Big]
 = \frac{-\sigma^4 d_i}{(\epsilon\sigma^2 + d_i)^2} < 0.
\end{align*}
Since this is negative for all $\epsilon$, the systematic term is strictly decreasing.

\paragraph{(ii) Parameter uncertainty gap}

Define $D(\epsilon) = \mathrm{diag}\big((\sigma^2 + d_j/\epsilon)^{-1}\big)$.
We show that
$g_{2i}(\epsilon) = (1-\gamma_i(\epsilon))^2 \cdot x_i^\top (X^\top D(\epsilon) X)^{-1}x_i$
is strictly decreasing in $\epsilon$.

\textit{Step 1: Derivative of the quadratic form.}
The $j$-th diagonal entry of $D(\epsilon)$ can be written as
\[
[D(\epsilon)]_{jj} = \frac{\epsilon}{\epsilon\sigma^2 + d_j}.
\]
Its derivative is
\[
[D(\epsilon)]_{jj}' = \frac{d_j}{(\epsilon\sigma^2 + d_j)^2} > 0,
\]
so $D'(\epsilon)$ is a diagonal matrix with strictly positive
entries. Let $M(\epsilon) = X^\top D(\epsilon) X$, so that
$M'(\epsilon) = X^\top D'(\epsilon) X$. Since $D'(\epsilon)$ is
positive definite and $X$ has full column rank, $M'(\epsilon)$ is
positive definite. Using the matrix identity
$(M^{-1})' = -M^{-1}M'M^{-1}$,
\[
\frac{d}{d\epsilon}\big(x_i^\top M(\epsilon)^{-1}x_i\big)
= -x_i^\top M(\epsilon)^{-1} M'(\epsilon)\, M(\epsilon)^{-1}x_i.
\]
Since $M'(\epsilon)$ is positive definite and $M(\epsilon)^{-1}$ is
nonsingular, for any $x_i \neq 0$ we have
$M(\epsilon)^{-1}x_i \neq 0$, and so
$x_i^\top M(\epsilon)^{-1} M'(\epsilon)\, M(\epsilon)^{-1}x_i > 0$. The derivative is therefore strictly negative.

\textit{Step 2: Derivative of the squared shrinkage.}
Writing $(1-\gamma_i(\epsilon))^2 = d_i^2/(\epsilon\sigma^2 + d_i)^2$,
we have
\[
\frac{d}{d\epsilon}(1-\gamma_i(\epsilon))^2
= -\frac{2\sigma^2 d_i^2}{(\epsilon\sigma^2 + d_i)^3} < 0.
\]

\textit{Conclusion:} Let $a(\epsilon) = (1-\gamma_i(\epsilon))^2$ and
$b(\epsilon) = x_i^\top M(\epsilon)^{-1}x_i$. Both are strictly
positive with strictly negative derivatives. By the product rule,
$(ab)' = a'b + ab' < 0$, so $g_{2i}(\epsilon) = a(\epsilon)\,
b(\epsilon)$ is strictly decreasing in $\epsilon$.
\end{proof}

\subsection{Proposition~\ref{prop:g2i_covariates} and Proof}
\label{app:g2i_covariates}

\begin{proposition}
\label{prop:g2i_covariates}
(Monotonicity of $g_{2i}$ in model dimension) For nested Fay--Herriot models with fixed $\sigma^2$, the $g_{2i}$ terms are non-decreasing in the number of covariates $p$.
\end{proposition}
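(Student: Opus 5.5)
With $\sigma^2$ fixed, the shrinkage factor $\gamma_i = \sigma^2/(\sigma^2+d_i)$ does not depend on the covariates, so $(1-\gamma_i)^2$ is a common positive multiplier. The statement therefore reduces to a claim about the quadratic form $h_i(X) := x_i^\top (X^\top D X)^{-1} x_i$, where $D = \mathrm{diag}\big((\sigma^2+d_j)^{-1}\big)$ is positive definite and also independent of $p$. (The same argument applies verbatim to $g_{2i}(\epsilon)$ with $D(\epsilon)$ in place of $D$.) I will show that $h_i$ cannot decrease when columns are appended to $X$. I will assume, as in Corollary~\ref{cor:thinning_gap_monotone}, that every design matrix in the nested sequence has full column rank, so that all inverses exist.

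The cleanest route is geometric. Let $W = D^{1/2}$ and $\tilde X = W X$. Note that $x_i = X^\top e_i$, where $e_i$ is the $i$th standard basis vector, so
\[
h_i(X) = e_i^\top X (X^\top D X)^{-1} X^\top e_i = (W^{-1}e_i)^\top H_X \,(W^{-1}e_i),
\qquad H_X := \tilde X(\tilde X^\top \tilde X)^{-1}\tilde X^\top .
\]
In the second expression I have used $X = W^{-1}\tilde X$. Here $H_X$ is the orthogonal projection onto $\mathrm{col}(\tilde X)$, and therefore $h_i(X) = \|H_X v_i\|^2$ with $v_i := W^{-1} e_i$. If the models are nested, $X_{\mathrm{small}} = X_{\mathrm{big}} S$ for some selection matrix $S$. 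Since $W$ acts on rows, $\mathrm{col}(W X_{\mathrm{small}}) \subseteq \mathrm{col}(W X_{\mathrm{big}})$. The projection onto a larger subspace dominates the projection onto a smaller one in the Loewner order, $H_{\mathrm{big}} \succeq H_{\mathrm{small}}$, because $H_{\mathrm{big}} - H_{\mathrm{small}}$ is itself the orthogonal projection onto the orthogonal complement of the smaller space within the larger. Hence $\|H_{\mathrm{big}} v_i\|^2 \ge \|H_{\mathrm{small}} v_i\|^2$. Multiplying by $(1-\gamma_i)^2$ gives $g_{2i}$ non-decreasing in $p$. Chaining this over the nested sequence $p=3,6,9,\dots$ completes the argument.

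As a cross-check, I would also give the algebraic version via a partitioned inverse. Write $X_{\mathrm{big}} = [X_1 \mid X_2]$ and $x_i = (x_{i1}, x_{i2})$. The Schur complement formula gives
\[
x_i^\top M_{\mathrm{big}}^{-1} x_i = x_{i1}^\top M_{11}^{-1} x_{i1} + r_i^\top S^{-1} r_i,
\]
where $S = X_2^\top\big(D - D X_1 M_{11}^{-1}X_1^\top D\big)X_2 \succ 0$ and $r_i = x_{i2} - X_2^\top D X_1 M_{11}^{-1} x_{i1}$. The extra term is nonnegative, which yields the same conclusion. It also shows that equality holds exactly when $r_i = 0$, which explains why the statement says ``non-decreasing'' rather than ``strictly increasing.''

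The main subtlety is not the inequality itself but making sure nothing else varies with $p$. In the Prasad--Rao term, $\gamma_i$ and $D$ are held fixed only because the proposition fixes $\sigma^2$. In practice a richer mean structure would lower the estimated $\sigma^2$, so the proof should state explicitly that the comparison is at a common $\sigma^2$, and the full-rank assumption should be flagged. I would also verify that $x_i = X^\top e_i$ is the $i$th row, written as a column, under the paper's convention, since the projection identity depends on it.
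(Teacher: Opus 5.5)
Your proof is correct and follows essentially the same route as the paper. Both arguments reduce $g_{2i}$ to the $i$th diagonal entry of $X(X^\top D X)^{-1}X^\top$, identify it with the orthogonal projection onto $\mathrm{col}(D^{1/2}X)$, and use the Loewner ordering of nested projections; the paper transfers that ordering back by congruence with $D^{1/2}$, whereas you evaluate the quadratic form at $v_i = D^{-1/2}e_i$, which is the same step. Your Schur-complement cross-check is a valid addition the paper does not include, and it also characterizes exactly when equality holds.
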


\begin{proof}
Let $k \in \{1,2\}$ index two nested models $\mathcal{M}_1$ and
$\mathcal{M}_2$ with covariate matrices $X_1$ and $X_2$ satisfying
$\mathrm{col}(X_1) \subseteq \mathrm{col}(X_2)$ and $\sigma^2$ is fixed. Since $\sigma^2$ is
fixed, the shrinkage factors $\gamma_i$ and the weight matrix
$D = D(1) = \mathrm{diag}\big((\sigma^2 + d_j)^{-1}\big)$ are common to
both models. It suffices to show that
$x_{ki}^\top (X_k^\top D X_k)^{-1} x_{ki}$ is at least as large for
$k=2$ as for $k=1$.

Define $Q_k = X_k(X_k^\top D X_k)^{-1}X_k^\top$. Then
$P_k = D^{1/2}\, Q_k\, D^{1/2}$ is the orthogonal projection onto
$\mathrm{col}(D^{1/2} X_k)$. Since $D^{1/2}$ is nonsingular,
$\mathrm{col}(X_1) \subseteq \mathrm{col}(X_2)$ implies
$\mathrm{col}(D^{1/2}X_1) \subseteq \mathrm{col}(D^{1/2}X_2)$, so
$P_2 - P_1$ is itself an orthogonal projection (onto the complement
of $\mathrm{col}(D^{1/2}X_1)$ within $\mathrm{col}(D^{1/2}X_2)$) and
hence positive semidefinite:
\[
D^{1/2}(Q_2 - Q_1)\,D^{1/2} \succeq 0.
\]
Since $D^{1/2}$ is nonsingular, congruence gives
$Q_2 - Q_1 \succeq 0$. Evaluating the $i$-th diagonal entry:
\[
x_{2i}^\top (X_2^\top D X_2)^{-1} x_{2i} \;\geq\;
x_{1i}^\top (X_1^\top D X_1)^{-1} x_{1i},
\]
which, together with constant $(1-\gamma_i)^2$, gives the result.
\end{proof}


\subsection{Derivation: Variance of the MSE estimator} \label{app:var_deriv}
\begin{lemma}[Moments of a squared Gaussian]\label{lem:sqnorm}
If $X \sim N(0,\sigma^2)$, then
\[
  \E{X^2} = \sigma^2
  \qquad \text{and} \qquad
  \Var{X^2} = 2\sigma^4.
\]
\end{lemma}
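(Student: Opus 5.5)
The plan is to reduce the general case to a standard normal and then compute the second and fourth moments directly. Write $X = \sigma Z$ with $Z \sim N(0,1)$. Then $\E{X^2} = \sigma^2 \E{Z^2}$ and $\Var{X^2} = \sigma^4 \Var{Z^2}$. It therefore suffices to show that $\E{Z^2} = 1$ and $\E{Z^4} = 3$. These give $\Var{Z^2} = \E{Z^4} - (\E{Z^2})^2 = 3 - 1 = 2$, and the stated formulas follow.

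For $\E{Z^2}$, the first claim is immediate: $\E{Z} = 0$ gives $\E{Z^2} = \Var{Z} = 1$. Equivalently, $\E{X^2} = \Var{X} + (\E{X})^2 = \sigma^2$.

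For the fourth moment, I will use Gaussian integration by parts (Stein's identity), $\E{Z f(Z)} = \E{f'(Z)}$. This identity follows from integrating by parts against the standard normal density $\varphi$, using $\varphi'(z) = -z\varphi(z)$; the boundary terms vanish because of Gaussian tail decay. Taking $f(z) = z^3$ gives $\E{Z^4} = 3\E{Z^2} = 3$. As an alternative route, one can differentiate the moment generating function $M(t) = e^{t^2/2}$ four times at $t=0$, which gives the same value.

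The only step requiring any care is the fourth-moment computation, specifically justifying the vanishing boundary terms in the integration by parts. This is routine, since any polynomial times $e^{-z^2/2}$ tends to zero. Equivalently, one can note that $X^2/\sigma^2 \sim \chi^2_1$, whose variance is known to be $2$.
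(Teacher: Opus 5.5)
Your proof is correct: the reduction $X=\sigma Z$, together with $\E{Z^2}=1$ and $\E{Z^4}=3$ (from Stein's identity with $f(z)=z^3$, or from the MGF), gives $\Var{X^2}=3\sigma^4-\sigma^4=2\sigma^4$. The paper states this lemma without proof, as the standard fact that $X^2/\sigma^2\sim\chi^2_1$ has mean $1$ and variance $2$, so your argument supplies the routine verification it omits.
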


Using this lemma we derive the variance of the MSE estimator first by conditioning on the training set $\ytrain$.
\begin{proof}
We have the MSE estimator
\[
\msehat
:= \frac{1}{m}\sum_{i=1}^m
\left[
\left( \thetatrain_i - \tfrac{1}{1-\epsilon}\ytest_i \right)^{\!2}
- \tfrac{d_i}{1-\epsilon}
\right]
\]

For ease of notation, define
\[
  \delta_i := \thetatrain_i - \theta_i,
  \qquad
  \eta_i := \tfrac{1}{1-\epsilon}\,\ytest_i - \theta_i.
\]
Note that $\delta_i \indep \eta_i$ (since $\thetatrain_i$ depends only on $\ytrain$) and $\eta_i \sim \N{0}{\tfrac{d_i}{1-\epsilon}}$.

Each term inside the sum is
\[
(\delta_i - \eta_i)^2 - \tfrac{d_i}{1-\epsilon}.
\]

By the law of total variance,
\begin{align*}
\Varthin{\msehat}
&= \E[\ytrain]{ \Var[\ytest]{ \msehat \agiven \ytrain} }
 + \Var[\ytrain]{ \E[\ytest]{\msehat \agiven \ytrain } }.
\end{align*}

We can then plug in the term inside the sum to get
\begin{align*}
\Varthin{\msehat}
&= \E[\ytrain]{ \Var[\ytest]{ \tfrac{1}{m}\sum_{i=1}^m \big[(\delta_i-\eta_i)^2 - \tfrac{d_i}{1-\epsilon}\big] \agiven \ytrain } }\\
&\quad + \Var[\ytrain]{ \E[\ytest]{ \tfrac{1}{m}\sum_{i=1}^m \big[(\delta_i-\eta_i)^2 - \tfrac{d_i}{1-\epsilon}\big] \agiven \ytrain } }.
\end{align*}

\paragraph{First term.}
Given $\ytrain$, the $\delta_i$ are constants and $\{\eta_i\}$ are independent across $i$, so
\[
\E[\ytrain]{ \Var[\ytest]{ \tfrac{1}{m}\sum_i \big[(\delta_i-\eta_i)^2 - \tfrac{d_i}{1-\epsilon}\big] \agiven \ytrain } }
= \frac{1}{m^2}\sum_{i=1}^m \E[\ytrain]{ \Var[\ytest]{ (\delta_i-\eta_i)^2 \agiven \ytrain } },
\]
since subtracting a constant does not affect variance.

For each $i$,
\begin{align*}
\Var[\ytest]{(\delta_i-\eta_i)^2 \agiven \ytrain}
&= \Var[\ytest]{\delta_i^2 + \eta_i^2 - 2\delta_i\eta_i \agiven \ytrain}\\
&= \Var[\ytest]{\eta_i^2} + 4\delta_i^2 \Var[\ytest]{\eta_i} - 4\delta_i \Cov[\ytest]{\eta_i^2}{\eta_i}.
\end{align*}
Since $\eta_i$ is zero-mean Gaussian, all odd moments vanish, thus
\[
\Cov[\ytest]{\eta_i^2}{\eta_i} = \E[\ytest]{\eta_i^3} - \E[\ytest]{\eta_i^2}\E[\ytest]{\eta_i} = 0.
\]
With $\eta_i \sim \N{0}{\tfrac{d_i}{1-\epsilon}}$ and Lemma~\ref{lem:sqnorm},
\[
\Var[\ytest]{\eta_i^2} = 2\!\left(\tfrac{d_i}{1-\epsilon}\right)^{\!2},
\qquad
\Var[\ytest]{\delta_i\eta_i \agiven \ytrain} = \delta_i^2\,\Var[\ytest]{\eta_i}
= \delta_i^2\,\tfrac{d_i}{1-\epsilon}.
\]
Hence
\[
\Var[\ytest]{(\delta_i-\eta_i)^2 \agiven \ytrain}
= 2\!\left(\tfrac{d_i}{1-\epsilon}\right)^{\!2} + 4\,\delta_i^2\,\tfrac{d_i}{1-\epsilon}.
\]
Taking expectation over $\ytrain$ yields
\[
\E[\ytrain]{ \Var[\ytest]{(\delta_i-\eta_i)^2 \agiven \ytrain} }
= 2\!\left(\tfrac{d_i}{1-\epsilon}\right)^{\!2}
+ 4\,\tfrac{d_i}{1-\epsilon}\,\E[\ytrain]{\delta_i^2}.
\]

\paragraph{Second term.}
\[
\E[\ytest]{(\delta_i-\eta_i)^2 - \tfrac{d_i}{1-\epsilon} \agiven \ytrain}
= \delta_i^2 + \E[\ytest]{\eta_i^2} - \tfrac{d_i}{1-\epsilon}
= \delta_i^2,
\]
where $\E[\ytest]{\eta_i^2}= \tfrac{d_i}{1-\epsilon}$ (Lemma~\ref{lem:sqnorm}). Hence
\[
\Var[\ytrain]{ \E[\ytest]{ \tfrac{1}{m}\sum_{i=1}^m \big[(\delta_i-\eta_i)^2 - \tfrac{d_i}{1-\epsilon}\big] \agiven \ytrain } }
= \Var[\ytrain]{ \tfrac{1}{m}\sum_{i=1}^m \delta_i^2 }.
\]

\paragraph{Result.}
Combining,
\[
\Varthin{\msehat}
= \frac{1}{m^2}\sum_{i=1}^m
\left[
2\!\left(\tfrac{d_i}{1-\epsilon}\right)^{\!2}
+ 4\,\tfrac{d_i}{1-\epsilon}\,\E[\ytrain]{\delta_i^2}
\right]
+ \Var[\ytrain]{ \tfrac{1}{m}\sum_{i=1}^m \delta_i^2 }.
\]
Substituting $\delta_i=(\thetatrain_i - \theta_i)$ and reorganizing yields
\begin{equation} \label{eq:variance}
\Varthin{\msehat}
= \frac{2}{m^2}\sum_{i=1}^m \!\left(
\!\left(\tfrac{d_i}{1-\epsilon}\right)^{\!2}
+ 2\,\tfrac{d_i}{1-\epsilon}\,\E[\ytrain]{(\thetatrain_i - \theta_i)^2}
\right)
+ \Var[\ytrain]{\tfrac{1}{m}\sum_{i=1}^m (\thetatrain_i - \theta_i)^2}.
\end{equation}
 The first term is driven by test-set variability; the second is the variability of squared error across training sets.
\end{proof}


\subsection{Derivation: Variance and Minimizing \texorpdfstring{$\epsilon$}{epsilon} for the Direct Estimator}
\label{app:direct_var_opt}

\begin{lemma}
\label{lem:direct_var_opt}
For the direct estimator $\thetatrain_i = \tfrac{1}{\epsilon}\ytrain_i$, the variance is given by
\[
\Varthin{\msehat} = \frac{2}{m^2}\sum_{i=1}^m \frac{d_i^2}{\epsilon^2(1-\epsilon)^2}
\]
which is minimized at $\epsilon = 1/2$.
\end{lemma}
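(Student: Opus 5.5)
We plug the direct estimator into the closed-form variance of Proposition~\ref{prop:mse_variance}, equation~\eqref{eq:var_closedform}, and then do a one-variable optimization. With $\thetatrain_i = \ytrain_i/\epsilon$, the training error is $\delta_i := \thetatrain_i - \theta_i = \ytrain_i/\epsilon - \theta_i$. Since $\ytrain_i \sim \N{\epsilon\theta_i}{\epsilon d_i}$ independently across $i$, we get $\delta_i \ind \N{0}{d_i/\epsilon}$. Hence $\E[\ytrain]{\delta_i^2} = d_i/\epsilon$.

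The first step evaluates the test-set variability term. Substituting $\E[\ytrain]{\delta_i^2} = d_i/\epsilon$ gives
\[
\frac{2}{m^2}\sum_{i=1}^m \left( \frac{d_i^2}{(1-\epsilon)^2} + \frac{2d_i^2}{\epsilon(1-\epsilon)} \right).
\]

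The second step evaluates the training-set variability term $\Var[\ytrain]{\frac{1}{m}\sum_i \delta_i^2}$. This is the only place needing care: the cross-covariances must vanish. They do because each $\delta_i$ depends only on $\ytrain_i$, and the $\ytrain_i$ are independent across areas, since the $y_i$ are independent under Assumption~\ref{assump:framework} and the thinning draws are independent. Applying Lemma~\ref{lem:sqnorm} with variance $d_i/\epsilon$ then yields
\[
\frac{1}{m^2}\sum_{i=1}^m \frac{2d_i^2}{\epsilon^2}.
\]

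The third step adds the two terms. Factoring out $2d_i^2/m^2$ for each area, the bracket becomes
\[
\frac{1}{(1-\epsilon)^2} + \frac{2}{\epsilon(1-\epsilon)} + \frac{1}{\epsilon^2} = \left(\frac{1}{\epsilon} + \frac{1}{1-\epsilon}\right)^2 = \frac{1}{\epsilon^2(1-\epsilon)^2}.
\]
This gives exactly the stated expression.

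Finally, every summand has the common factor $1/(\epsilon(1-\epsilon))^2$, so minimizing the variance is equivalent to maximizing $\epsilon(1-\epsilon)$ over $(0,1)$. That product is a concave parabola with its unique maximum at $\epsilon = 1/2$, which is therefore the minimizer. The main step to watch is the vanishing of cross-area covariances in the training term; everything else is algebra that collapses to a perfect square.
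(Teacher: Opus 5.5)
Your proposal is correct and follows essentially the same route as the paper: you substitute $\mathbb{E}[\delta_i^2]=d_i/\epsilon$ and the training variance (independent across areas) into Proposition~\ref{prop:mse_variance}, collapse the bracket to the perfect square $\bigl(1/\epsilon + 1/(1-\epsilon)\bigr)^2$, and optimize over $\epsilon$. The only cosmetic difference is that you maximize the concave parabola $\epsilon(1-\epsilon)$ directly, whereas the paper differentiates its square $[\epsilon(1-\epsilon)]^2$.
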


\begin{proof}
We first show that the variance has the form given above. For the direct estimator, $\thetatrain_i = \frac{1}{\epsilon}\ytrain_i \sim N(\theta_i, d_i/\epsilon)$, so the prediction error satisfies
\[
\thetatrain_i - \theta_i \sim N(0, d_i/\epsilon), \qquad \E[\ytrain]{(\thetatrain_i - \theta_i)^2} = d_i/\epsilon.
\]
By the squared Gaussian lemma and independence across areas, the training variability is
\[
\Var[\ytrain]{\tfrac{1}{m}\sum_{i=1}^m (\thetatrain_i - \theta_i)^2} = \frac{2}{m^2}\sum_{i=1}^m \frac{d_i^2}{\epsilon^2}.
\]
Substituting into the variance formula (Proposition~\ref{prop:mse_variance}):
\begin{align*}
\Varthin{\msehat} &= \frac{2}{m^2}\sum_{i=1}^m \left(\frac{d_i^2}{(1-\epsilon)^2} + \frac{2d_i^2}{\epsilon(1-\epsilon)}\right) + \frac{2}{m^2}\sum_{i=1}^m \frac{d_i^2}{\epsilon^2} \\
&= \frac{2}{m^2}\sum_{i=1}^m d_i^2 \left(\frac{1}{(1-\epsilon)^2} + \frac{2}{\epsilon(1-\epsilon)} + \frac{1}{\epsilon^2}\right).
\end{align*}
The bracketed term simplifies by recognizing it is a square of a sum which further simplifies to
\[
\left( \frac{1}{1-\epsilon} + \frac{1}{\epsilon}  \right)^2 = \left( \frac{1-\epsilon + \epsilon }{\epsilon (1-\epsilon)}\right)^2 = \frac{1}{{\epsilon^2 (1-\epsilon)^2}}.
\]

Since $\sum_{i=1}^m d_i^2$ is constant in $\epsilon$, it suffices to \textit{maximize the denominator} $f(\epsilon) := [\epsilon(1-\epsilon)]^2$ on $(0,1)$. Differentiating,
\[
f'(\epsilon) = 2\epsilon(1-\epsilon)(1 - 2\epsilon) = 0.
\]
The interior critical point is $\epsilon = 1/2$, which is a maximum since $f(\epsilon) \to 0$ as $\epsilon \to 0$ or $\epsilon \to 1$.
\end{proof}


\subsection{
  Proof of Proposition~\ref{prop:fh_var_minimize}:
  Variance-minimizing
  \texorpdfstring{$\epsilon$}{epsilon}
  for Fay--Herriot
}\label{app:fh_var_minimize}
\paragraph{Monotonicity and Minimum for the Fay--Herriot:}
Here we show that under a Fay--Herriot model with known parameters, the variance of the MSE estimator is monotonically increasing for $\epsilon \in [1/2, 1)$ and that the minimum must exist in $(0, 1/2)$.

\begin{proof}
Under the Fay--Herriot model with known $\beta$ and $\sigma^2$, the posterior mean given $\ytrain_i$ is
\[
\tilde{\theta}_i^{(1)} = \gamma_i(\epsilon) \frac{y_i^{(1)}}{\epsilon} + (1-\gamma_i(\epsilon)) x_i^\top \beta
\]
where $\gamma_i(\epsilon) = \epsilon\sigma^2 / (\epsilon\sigma^2 + d_i)$.

Recall that the model assumes $\theta_i = x_i^\top\beta + u_i$ where $u_i$ are the IID random effects. Using this we derive the prediction error for area $i$:
\begin{align*}
\tilde{\theta}_i^{(1)} - \theta_i
&= \gamma_i(\epsilon) \frac{y_i^{(1)}}{\epsilon} + (1-\gamma_i(\epsilon)) x_i^\top\beta - (x_i^\top\beta + u_i) \\
&= \gamma_i(\epsilon) \left(\frac{y_i^{(1)}}{\epsilon} - x_i^\top \beta\right) - u_i \\
&= \gamma_i(\epsilon) \left(\frac{y_i^{(1)}}{\epsilon} - \theta_i + u_i \right) - u_i \\
&= \gamma_i(\epsilon) \left(\frac{y_i^{(1)}}{\epsilon} - \theta_i \right) -(1-\gamma_i(\epsilon)) u_i ,
\end{align*}
Note that $u_i \iid N(0, \sigma^2)$ and $y_i^{(1)}/\epsilon \ind N(\theta_i, d_i/\epsilon)$. Thus the prediction error is a weighted combination of two independent zero-mean Gaussian distributions with the combined variance
\begin{align*}
    g_i(\epsilon) \, &:=  \gamma_i(\epsilon)^2 \frac{d_i}{\epsilon}  + (1-\gamma_i(\epsilon))^2 \sigma^2 \\
    &= \left(\frac{\epsilon\sigma^2}{\epsilon\sigma^2 + d_i}\right)^2 \frac{d_i}{\epsilon} + \left(\frac{d_i}{\epsilon\sigma^2 + d_i}\right)^2 \sigma^2 = \sigma^2d_i \cdot \frac{\epsilon\sigma^2 + d_i}{(\epsilon\sigma^2 + d_i)^2} = \frac{\sigma^2 d_i}{(\epsilon\sigma^2 + d_i)}.
\end{align*}

Moreover, the prediction errors are independent across areas. By Lemma~\ref{lem:sqnorm}, the training variability is
\[
\Var[\ytrain]{\tfrac{1}{m}\sum_{i=1}^m (\tilde{\theta}_i^{(1)} - \theta_i)^2} = \frac{1}{m^2}\sum_{i=1}^m \Var{(\tilde{\theta}_i^{(1)} - \theta_i)^2} = \frac{2}{m^2}\sum_{i=1}^m g_i(\epsilon)^2.
\]
Substituting into the variance formula from Proposition~\ref{prop:mse_variance}, the variance becomes
\[
V(\epsilon) = \frac{2}{m^2}\sum_{i=1}^m \left[\frac{d_i^2}{(1-\epsilon)^2} + 2\frac{d_i}{1-\epsilon} \cdot g_i(\epsilon) + g_i(\epsilon)^2\right] = \frac{2}{m^2}\sum_{i=1}^m \left[\frac{d_i}{1-\epsilon} + g_i(\epsilon)\right]^2.
\]
Define $f_i(\epsilon) := \frac{d_i}{1-\epsilon} + g_i(\epsilon)$. Then $V(\epsilon) = \frac{2}{m^2}\sum_{i=1}^m f_i(\epsilon)^2$ and
\[
V'(\epsilon) = \frac{4}{m^2}\sum_{i=1}^m f_i(\epsilon) \cdot f_i'(\epsilon).
\]
Since $f_i(\epsilon) > 0$ for all $\epsilon \in (0,1)$, it suffices to show $f_i'(\epsilon) > 0$ for $\epsilon \in [1/2, 1)$.

Differentiating,
\[
f_i'(\epsilon) = \frac{d_i}{(1-\epsilon)^2} + g_i'(\epsilon) = \frac{d_i}{(1-\epsilon)^2} - \frac{\sigma^4 d_i}{(\epsilon\sigma^2 + d_i)^2}.
\]
Combining over a common denominator,
\[
f_i'(\epsilon) = \frac{d_i \left[(\epsilon\sigma^2 + d_i)^2 - \sigma^4(1-\epsilon)^2\right]}{(1-\epsilon)^2(\epsilon\sigma^2 + d_i)^2}.
\]
The numerator inside the brackets factors as
\begin{align*}
(\epsilon\sigma^2 + d_i)^2 - \sigma^4(1-\epsilon)^2
&= d_i^2 - \sigma^4 + 2\epsilon\sigma^2(d_i + \sigma^2) \\
&= (d_i + \sigma^2)(d_i - \sigma^2 + 2\epsilon\sigma^2) \\
&= (d_i + \sigma^2)(d_i + (2\epsilon - 1)\sigma^2).
\end{align*}
Thus
\[
f_i'(\epsilon) = \frac{d_i(d_i + \sigma^2)(d_i + (2\epsilon - 1)\sigma^2)}{(1-\epsilon)^2(\epsilon\sigma^2 + d_i)^2}.
\]

For $\epsilon \geq 1/2$, we have $(2\epsilon - 1) \geq 0$, so $d_i + (2\epsilon - 1)\sigma^2 > 0$. Note that $d_i, \sigma^2 >0$ and the denominator is strictly positive as well. Hence $f_i'(\epsilon) > 0$ for all $i$ and all $\epsilon \in [1/2, 1)$.

Therefore $V'(\epsilon) > 0$ on $[1/2, 1)$, establishing that $V(\epsilon)$ is strictly increasing on this interval.

Also as $\epsilon$ approaches $1$, the test-set term $d_i/(1-\epsilon) \to \infty$, so $V(\epsilon) \to \infty$. On the other side, as $\epsilon \to 0^+$, both $d_i/(1-\epsilon) \to d_i$ and $g_i(\epsilon) \to \sigma^2$ remain bounded, so $V(\epsilon)$ is bounded.

Since $V$ is continuous on $(0,1)$, the variance-minimizing $\epsilon^*$ lies strictly below $1/2$.
\end{proof}

\paragraph{Variance-minimizing $\epsilon_i^*$ for each area:}

Now we simply use the derivative above to find a variance-minimizing $\epsilon_i^*$ for each area $i=1, \ldots, m.$
\begin{proof}
The area-specific variance contribution is proportional to $f_i(\epsilon)^2$. Since $f_i(\epsilon) > 0$, minimizing $f_i(\epsilon)^2$ is equivalent to finding where $f_i'(\epsilon) = 0$. From the factored form
\[
f_i'(\epsilon) = \frac{d_i(d_i + \sigma^2)(d_i + (2\epsilon - 1)\sigma^2)}{(1-\epsilon)^2(\epsilon\sigma^2 + d_i)^2},
\]
the only root in $(0,1)$ occurs when $d_i + (2\epsilon - 1)\sigma^2 = 0$, yielding
\[
\epsilon_i^* = \frac{1}{2} - \frac{d_i}{2\sigma^2}.
\]
This is a minimum since $f_i'(\epsilon) < 0$ for $\epsilon < \epsilon_i^*$ and $f_i'(\epsilon) > 0$ for $\epsilon > \epsilon_i^*$.

Given the range $\epsilon$, the variance-minimizing value truncated to the feasible range is
\[
\epsilon_i^* \;=\; \max\!\left\{\,0,\; \frac{1}{2} - \frac{d_i}{2\sigma^2}\right\}.
\]
\end{proof}

\subsection{Log-Scale Version of Figure~\ref{fig:tradeoff}}\label{app:log_tradeoff}

\begin{figure}[H]
    \centering
  \includegraphics[width=0.9\linewidth]{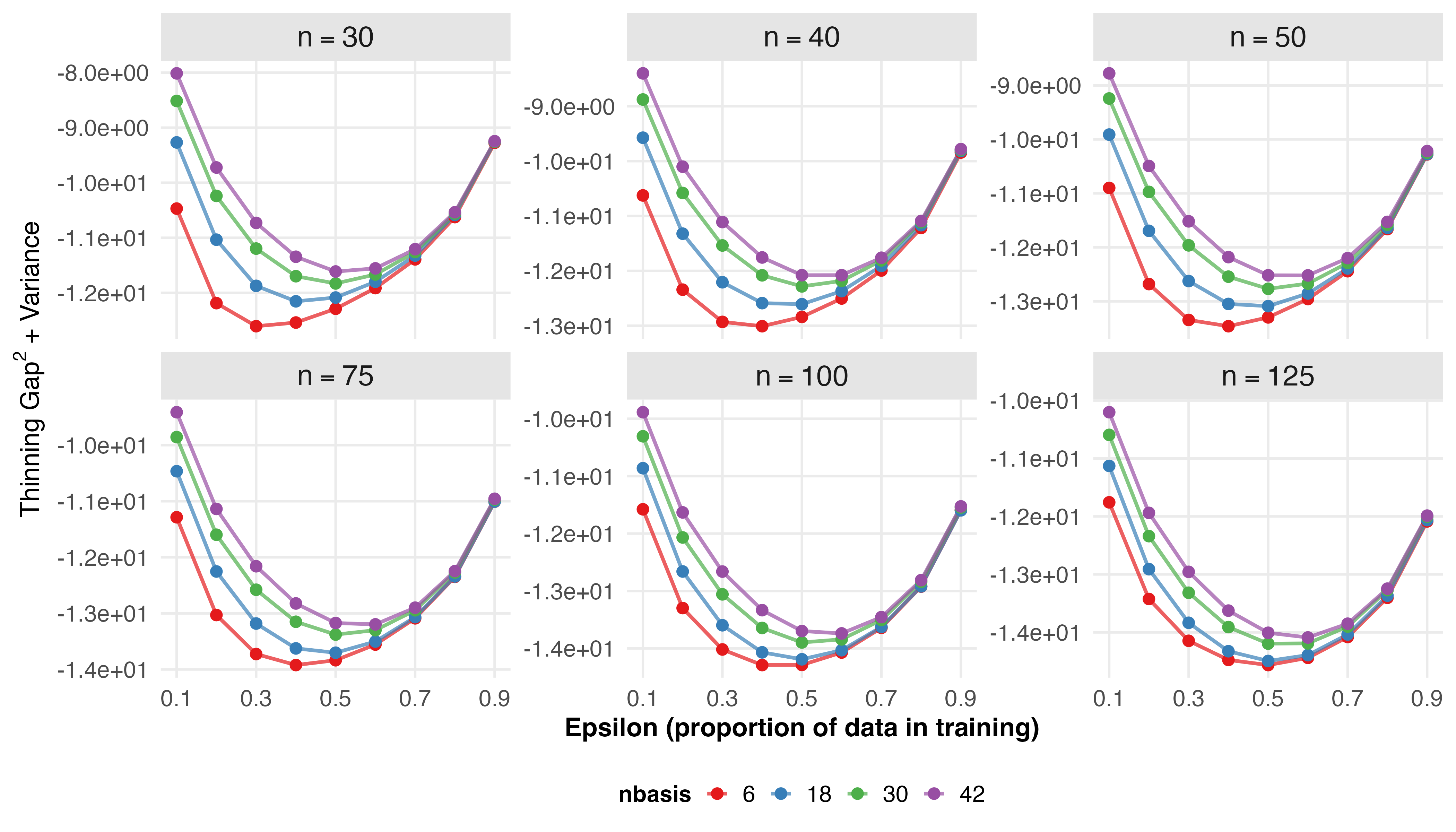}
    \caption{
    The thinning gap-variance trade-off: sum of squared thinning gap and variance of the MSE estimator for Fay--Herriot models with $p = 6, 18, 30, 42$ spatial basis functions averaged across 50 samples from each design. The log-scale reveals the differing interior optima for each model and how the gap in the curve shrinks with higher $\epsilon$.}
  \end{figure}

\subsection{Extrapolation-Based Thinning Gap Correction}\label{app:extrapolation}

Recall that the thinning gap is the systematic difference $\msethin - \msefull = m^{-1}\sum_i \Delta_i(\epsilon)$ between the thinned-data oracle MSE and the full-data target (Proposition~\ref{prop:thinning_gap}). We reparametrize the per-area gap $\Delta_i(\epsilon)$ in terms of $t = (1-\epsilon)/\epsilon$, so that $t = 0$ corresponds to the full-data limit $\epsilon = 1$. Writing $\kappa_i = 1 - \gamma_i = d_i/(\sigma^2 + d_i)$ and substituting $\epsilon = 1/(1+t)$,
\[
\Delta_i(\epsilon) = \gamma_i^2\, d_i \cdot \frac{t}{1 + \kappa_i\, t},
\]
a rational function of $t$ that vanishes at $t = 0$ and is increasing and concave. Expanding $1/(1 + \kappa_i t) = \sum_{k=0}^\infty (-\kappa_i t)^k$ for $|\kappa_i t| < 1$, the average gap admits the power series
\[
\msethin - \msefull = \frac{1}{m}\sum_{i=1}^m \Delta_i(\epsilon) = c_1\, t + c_2\, t^2 + c_3\, t^3 + \cdots,
\]
with $c_1 = m^{-1}\sum_i \gamma_i^2 d_i > 0$ and higher-order coefficients that alternate in sign with decreasing magnitude since $\kappa_i \in (0,1)$. The linear term therefore dominates, and Figure~\ref{fig:extrapolation_gap} confirms that the gap is approximately linear in $t$ across all designs. The parameter-uncertainty correction $g_{2i}(\epsilon)$ from Proposition~\ref{prop:thinning_gap2}, which accounts for estimation of $\beta$, is $O(p/m)$ and likewise rational in $t$, so this structure is preserved under estimated parameters.

\begin{figure}[H]
    \centering
  \includegraphics[width=0.9\linewidth]{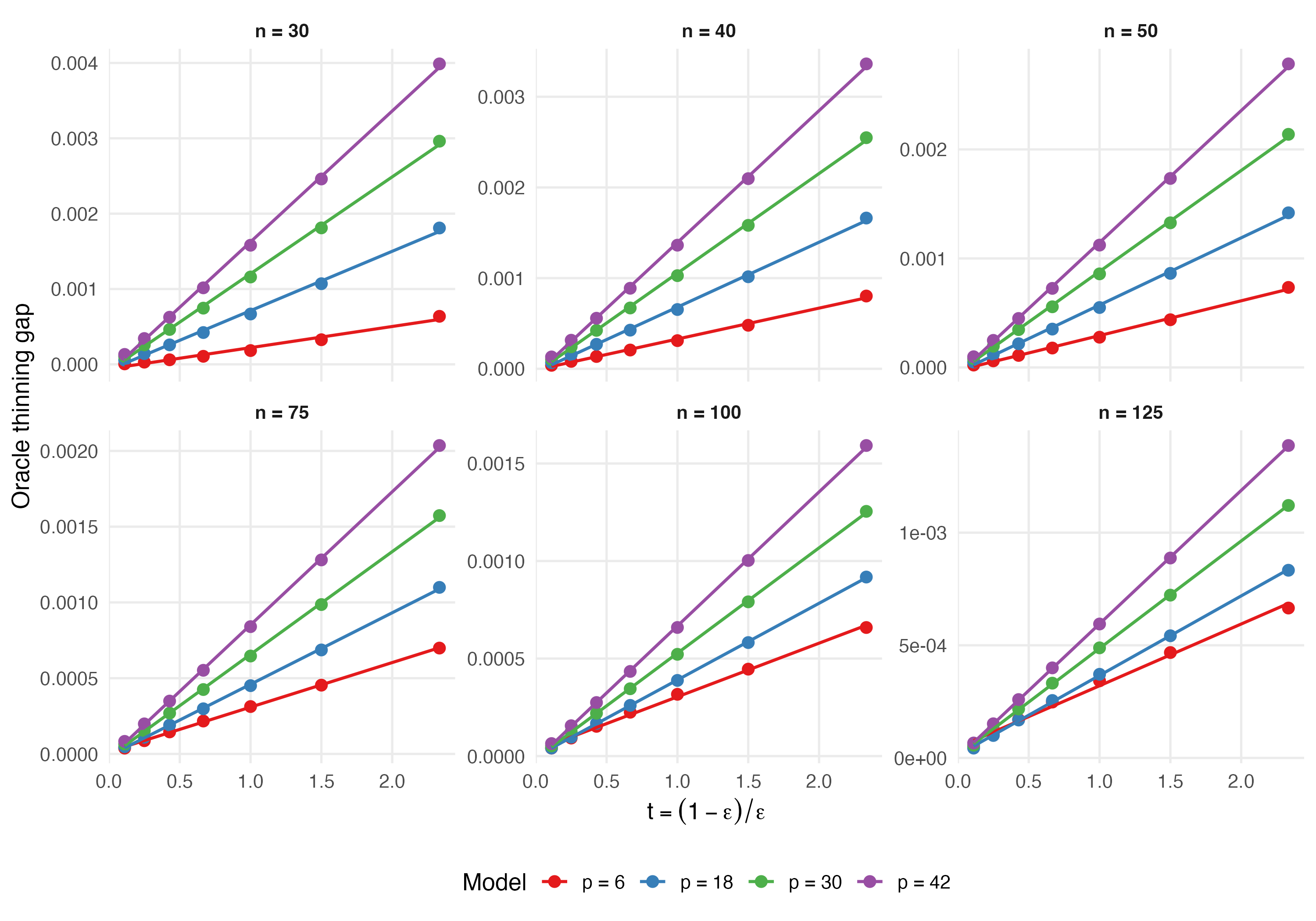}
    \caption{The thinning gap $\msethin - \msefull$, computed at the true $\theta_i$ and averaged over $S = 50$ samples, plotted against $t = (1-\epsilon)/\epsilon$ over the extrapolation-grid range ($\epsilon \geq 0.3$) for $p = 6, 18, 30, 42$ basis functions. The approximately linear relationship in $t$ is consistent across all designs.}
    \label{fig:extrapolation_gap}
\end{figure}

Because $\msehat$ is unbiased for $\msethin$ (Theorem~\ref{thm:unbiased_mse}) and $\msethin = \msefull$ at $t = 0$, the near-linearity suggests estimating $\msefull$ by evaluating $\msehat$ on a grid $\epsilon_1, \ldots, \epsilon_K$, fitting the polynomial regression
\[
\msehat(\epsilon_k) = \alpha_0 + \alpha_1\, t_k + \alpha_2\, t_k^2 + \eta_k,
\]
and reading off the intercept $\hat\alpha_0$ as an extrapolated estimate of $\msefull$ at $t = 0$ ($\epsilon = 1$). 
With the feasible, noisy $\msehat$, however, the intercept is anchored by the grid points nearest $\epsilon = 1$, where the test set is smallest and $\Var{\msehat}$ is largest (\S\ref{sec:variance}); the resulting per-area test-proxy variance $d_i/(1-\epsilon)$ overwhelms the extrapolated signal. Table~\ref{tab:extrapolation} compares pooled estimators using $R = 5$ thinning replicates (quadratic and linear fits over two grids) against single-$\epsilon$ baselines, scored by the RMSE of the selected basis count relative to the average oracle $p^*$.

\begin{table}[H]
\centering
\caption{RMSE of selected $p$ versus average oracle $p^*$ ($R = 5$, DT-MSE). Pooled extrapolation methods (top) versus single-$\epsilon$ baselines (bottom).}
\label{tab:extrapolation}
\begin{tabular}{l rrrrrr}
\toprule
Method & $n\!=\!30$ & $n\!=\!40$ & $n\!=\!50$ & $n\!=\!75$ & $n\!=\!100$ & $n\!=\!125$ \\
\midrule
Quadratic $[0.3, 0.7]$ & 8.2 & 8.3 & 11.5 & 13.6 & 14.3 & 14.3 \\
Linear $[0.3, 0.7]$    & 8.3 & 9.1 & 10.8 & 12.6 & 15.2 & 16.1 \\
Linear $[0.5, 0.7]$    & 8.2 & 8.3 & 11.0 & 13.6 & 14.2 & 14.3 \\
\midrule
$\epsilon = 0.5$ & 7.9 & 9.9 & 8.8 & 7.8 &  6.1 &  4.3 \\
$\epsilon = 0.6$ & 7.5 & 9.6 & 8.3 & 7.5 &  5.6 &  6.3 \\
$\epsilon = 0.7$ & 7.6 & 9.0 & 8.0 & 8.2 &  9.5 &  9.6 \\
\bottomrule
\end{tabular}
\end{table}

The pooled estimator underperforms the best single-$\epsilon$ baseline in 5 of the 6 designs. The degradation is most pronounced at the large sample sizes ($n = 100, 125$), where the spread of the thinned MSE across candidate models is smallest and the extrapolation noise is therefore most damaging relative to the signal. The narrower $[0.5, 0.7]$ grid with a linear fit performs no worse than the quadratic fit over $[0.3, 0.7]$, so neither adding low-$\epsilon$ points nor using a quadratic fit brings any benefit. Extending the grid in the other direction does not help either: including $\epsilon = 0.8$, which lies closer to the target $t = 0$, degrades the fit across all designs, as its larger variance outweighs the shorter extrapolation distance.


\subsection{Multi-fold Gaussian Data Thinning}\label{app:multifold}

Multi-fold thinning generalizes Algorithm~\ref{alg:thin} to produce $K \geq 2$ mutually independent folds of each direct estimate. The marginal distribution of each fold is $y_i^{(k)} \sim \N{\theta_i/K}{d_i/K}$, the folds are mutually independent across $k$, and they sum to $y_i$. These properties hold only marginally, not conditionally on $y_i$.

\begin{algorithm}[H]
\caption{Multi-fold Gaussian Data Thinning (equal folds; based on Algorithm~2 and Example~5 of \citealt{neufeld_data_2024})}\label{alg:multifold}
\begin{algorithmic}[1]
\Require Direct estimates $y_i \sim \N{\theta_i}{d_i}$ with known variances $d_i$, for $i = 1, \ldots, m$
\Require Number of folds $K \geq 2$
\For{each area $i = 1, \ldots, m$}
  \State Draw $(y_i^{(1)}, \ldots, y_i^{(K)}) \mid y_i \;\sim\; N_{K}\!\left(\frac{1}{K}\, y_i\,\mathbf{1}_{K},\; \frac{1}{K}\, d_i\!\left(I_{K} - \frac{1}{K}\,\mathbf{1}_{K}\mathbf{1}_{K}^\top\right)\right)$
  \Statex \hspace{2.5em} subject to $\textstyle\sum_{k=1}^K y_i^{(k)} = y_i$
  \State Set $y_i^{(-k)} \leftarrow y_i - y_i^{(k)}$ for each $k = 1, \ldots, K$
\EndFor
\State \Return $\bigl\{y_i^{(k)},\, y_i^{(-k)}\bigr\}_{k=1}^{K}$ for each area $i$
\end{algorithmic}
\end{algorithm}

For equal folds, the joint conditional distribution of $(y_i^{(1)}, \ldots, y_i^{(K)}) \mid y_i$ is a degenerate multivariate normal (Example~5 of \citealt{neufeld_data_2024}) and the constraint $\textstyle\sum_{k=1}^K y_i^{(k)} = y_i$ is needed.

For each fold $k = 1, \ldots, K$, define the training component $y_i^{(-k)} := y_i - y_i^{(k)}$, which has marginal distribution $y_i^{(-k)} \sim \N{(1-\epsilon)\,\theta_i}{(1-\epsilon)\,d_i}$ with $\epsilon = (K-1)/K$. Note that one could allocate more than one fold for testing to adjust training fraction given a fixed $K$. Ex: for $K=5$, use $3$ components for training and 2 components for testing to set training fraction $\epsilon=3/5$ for each fold.


\subsection{Proof of Weighted MSE Equivalence of Likelihood Validation}
\label{app:lik_mse}

\begin{proof}
The plug-in predictive log-likelihood is
\[
\ell_{\epsilon} = \sum_{i=1}^m \log \phi\!\left( \ytest_i \agiven (1-\epsilon)\thetatrain_i,\, (1-\epsilon)d_i \right).
\]
Expanding the Gaussian log-density,
\[
\ell_{\epsilon} = \sum_{i=1}^m \left[ -\tfrac{1}{2}\log(2\pi(1-\epsilon)d_i) - \frac{(\ytest_i - (1-\epsilon)\thetatrain_i)^2}{2(1-\epsilon)d_i} \right].
\]
Conditioning on the training set $\ytrain$ and taking expectations over $\ytest$, we evaluate the squared term. Write $\ytest_i = (1-\epsilon)\theta_i + (1-\epsilon)\eta_i$ where $\eta_i \sim \N{0}{d_i/(1-\epsilon)}$. Then
\begin{align*}
\ytest_i - (1-\epsilon)\thetatrain_i
&= (1-\epsilon)\theta_i + (1-\epsilon)\eta_i - (1-\epsilon)\thetatrain_i \\
&= (1-\epsilon)\left( \theta_i - \thetatrain_i + \eta_i \right).
\end{align*}
Thus
\[
\left( \ytest_i - (1-\epsilon)\thetatrain_i \right)^2 = (1-\epsilon)^2 \left( \theta_i - \thetatrain_i + \eta_i \right)^2.
\]
Taking expectations over $\ytest$ (i.e., over $\eta_i$) with $\ytrain$ fixed,
\begin{align*}
\E[\ytest]{\left( \ytest_i - (1-\epsilon)\thetatrain_i \right)^2 \agiven \ytrain}
&= (1-\epsilon)^2 \E[\ytest]{\left( \theta_i - \thetatrain_i + \eta_i \right)^2 \agiven \ytrain} \\
&= (1-\epsilon)^2 \left[ \left( \thetatrain_i - \theta_i \right)^2 + \Var{\eta_i} \right] \\
&= (1-\epsilon)^2 \left[ \left( \thetatrain_i - \theta_i \right)^2 + \tfrac{d_i}{1-\epsilon} \right],
\end{align*}
where the cross-term vanishes since $\E{\eta_i} = 0$. Substituting back,
\begin{align*}
\E[\ytest]{\ell_{\epsilon} \given \ytrain}
&= \sum_{i=1}^m \left[ -\tfrac{1}{2}\log(2\pi(1-\epsilon)d_i) - \frac{(1-\epsilon)^2}{2(1-\epsilon)d_i} \left( \left( \thetatrain_i - \theta_i \right)^2 + \tfrac{d_i}{1-\epsilon} \right) \right] \\
&= \sum_{i=1}^m \left[ -\tfrac{1}{2}\log(2\pi(1-\epsilon)d_i) - \tfrac{1-\epsilon}{2d_i} \left( \thetatrain_i - \theta_i \right)^2 - \tfrac{1}{2} \right] \\
&= C - \frac{1}{2}\sum_{i=1}^m \frac{1-\epsilon}{d_i} \left( \thetatrain_i - \theta_i \right)^2,
\end{align*}
where $C = -\tfrac{m}{2} - \tfrac{1}{2}\sum_{i=1}^m \log(2\pi(1-\epsilon)d_i)$ depends only on known constants.
\end{proof}

\end{document}